\title{Outer Bounds for Multiple Access Channels with Feedback using Dependence Balance\thanks{This work was supported by
NSF Grants CCF $04$-$47613$, CCF $05$-$14846$, CNS $07$-$16311$ and
CCF 07-29127. }}
\author{Ravi Tandon \qquad Sennur Ulukus \\
\normalsize Department of Electrical and Computer Engineering\\
\normalsize University of Maryland, College Park, MD 20742 \\
\normalsize {\it ravit@umd.edu} \qquad {\it ulukus@umd.edu}}
\newtheorem{Lem}{Lemma}
\newenvironment{proof}[1]{\medskip\par\noindent
{\bf Proof:\,}\,#1}{{\mbox{\,$\blacksquare$}\par}}
\begin{document}
\maketitle

\begin{abstract}
We use the idea of dependence balance \cite{Hekstra_Willems:1989}
to obtain a new outer bound for the capacity region of the
discrete memoryless multiple access channel with noiseless
feedback (MAC-FB). We consider a binary additive noisy MAC-FB
whose feedback capacity is not known. The binary additive noisy
MAC considered in this paper can be viewed as the discrete
counterpart of the Gaussian MAC-FB. Ozarow \cite{OZ:1984}
established that the capacity region of the two-user Gaussian
MAC-FB is given by the cut-set bound. Our result shows that for
the discrete version of the channel considered by Ozarow, this is
not the case. Direct evaluation of our outer bound is intractable
due to an involved auxiliary random variable whose large
cardinality prohibits an exhaustive search. We overcome this
difficulty by using functional analysis to explicitly evaluate our
outer bound. Our outer bound is strictly less than the cut-set
bound at all points on the capacity region where feedback
increases capacity. In addition, we explicitly evaluate the
Cover-Leung achievable rate region \cite{CL:1981} for the binary
additive noisy MAC-FB in consideration. Furthermore, using the
tools developed for the evaluation of our outer bound, we also
explicitly characterize the boundary of the feedback capacity
region of the binary erasure MAC, for which the Cover-Leung
achievable rate region is known to be tight. This last result
confirms that the feedback strategies developed in
\cite{KramerMAC:1999} for the binary erasure MAC are capacity
achieving.
\end{abstract}

\newpage

\section{Introduction}
Noiseless feedback can increase the capacity region of the
discrete memoryless MAC, unlike for the single-user discrete
memoryless channel. This was shown by Gaarder and Wolf in
\cite{GW:1975} for the binary erasure MAC, which is defined as
$Y=X_{1}+X_{2}$. Ozarow showed in \cite{OZ:1984} that feedback can
also increase the capacity region of a two-user Gaussian MAC-FB. A
constructive achievability scheme based on the classical
Kailath-Schalkwijk \cite{SK:1966} feedback scheme was shown to be
optimal for the two-user Gaussian MAC-FB. Moreover, the cut-set
outer bound was shown to be tight in this case.

Subsequently, Cover and Leung obtained an achievable rate region
for the general MAC-FB based on block Markov superposition coding
\cite{CL:1981}. Even though this region is in general larger than
the capacity region of the MAC without feedback, it is not optimal
for the two-user Gaussian MAC-FB, as was shown in \cite{OZ:1984}.
Kramer \cite{Kramer:thesis} used the notion of directed
information to obtain an expression for the capacity region of the
discrete memoryless MAC-FB. Unfortunately, this expression is in
an incomputable non-single-letter form. Recently, Bross and
Lapidoth \cite{BrossLapidoth:2005} proposed an achievable rate
region for the two-user discrete memoryless MAC-FB and showed that
their region includes the Cover-Leung region, the inclusion being
strict for some channels.

For a specific class of MAC-FB, Willems \cite{WL:1982} developed
an outer bound that equals the Cover-Leung achievable rate region.
For this class of MAC-FB, each channel input (say $X_{1}$) should
be expressible as a deterministic function of the other channel
input ($X_{2}$) and the channel output ($Y$). The binary erasure
MAC considered by Gaarder and Wolf, where $Y=X_{1}+X_{2}$, falls
into this class of channels. Therefore, Cover-Leung region is the
feedback capacity region for the binary erasure MAC.

A general outer bound for MAC-FB is the cut-set bound. Although
the cut-set bound was shown to be tight for the two-user Gaussian
MAC-FB, it is in general loose. An intuitive reason for the
cut-set bound to be loose for the general MAC-FB is its
permissibility of arbitrary input distributions, some of which
yielding rates which may not be achievable. For instance, even
though Cover-Leung achievability scheme introduces correlation
between $X_{1}$ and $X_{2}$, it is a limited form of correlation,
as the channel inputs are conditionally independent given an
auxiliary random variable, whereas the cut-set bound allows all
possible correlations.

The idea of dependence balance was introduced by Hekstra and
Willems in \cite{Hekstra_Willems:1989} to obtain an outer bound on
the capacity region of the single-output two-way channel. The
basic idea behind this outer bound is to restrict the set of
allowable input distributions, consequently restricting arbitrary
correlation between channel inputs. The authors also developed a
parallel channel extension for the dependence balance bound. The
parallel channel extension can be interpreted as follows: the
parallel channel output can be considered as a genie aided
information which is made available at both transmitters and the
receiver and it also effects the set of allowable input
distributions through the dependence balance bound. Depending on
the choice of the genie information (which is equivalent to
choosing a parallel channel), there is an inherent tradeoff
between the set of allowable input distributions and the excessive
mutual information rate terms which appear in the rate expressions
as a consequence of the parallel channel output. We will exploit
this tradeoff provided by the parallel channel extension of the
dependence balance bound to obtain a strict improvement over the
cut-set bound for a particular MAC whose feedback capacity is not
known.

To motivate the choice of our MAC, consider the binary erasure MAC
used by Gaarder and Wolf given by $Y=X_{1}+X_{2}$. If we introduce
binary additive noise at the channel output, then the channel
becomes $Y=X_{1}+X_{2}+N$, where all $X_{1}$, $X_{2}$ and $N$ are
binary and $N$ has a uniform distribution. This is a
non-deterministic noisy MAC which does not fall into any class of
channels for which the feedback capacity is known. We should mention
that this particular MAC was extensively studied by Kramer in
\cite{Kramer:thesis, Kramer:2003}, where the first improvement over
the Cover-Leung achievable rate region was obtained.

We extend the idea of dependence balance to obtain an outer bound
for the entire capacity region of this binary additive noisy
MAC-FB. Direct evaluation of the parallel channel based dependence
balance bound is intractable due to an involved auxiliary random
variable whose large cardinality prohibits an exhaustive search.
We use composite functions and their properties to obtain a simple
characterization for our bound. Our outer bound strictly improves
upon the cut-set bound at all points on the boundary where
feedback increases capacity. In addition, we explicitly evaluate
the Cover-Leung achievable rate region for our binary additive
noisy MAC-FB.

We particularly focus on the symmetric-rate\footnote{By
symmetric-rate point, we refer to the maximum rate $R$ such that the
rate pair $(R,R)$ lies in the capacity region of MAC-FB.} point on
the feedback capacity region of this channel. Cover-Leung's
achievable symmetric-rate for this channel was obtained in
\cite{Kramer:2003} as $0.43621$ bits/transmission. In
\cite{Kramer:2003}, Kramer obtained an improved symmetric-rate inner
bound as $0.43879$ bits/transmission by using superposition coding
and binning with code trees. The cut-set upper bound on the
symmetric-rate was obtained in \cite{Kramer:2003} as $0.45915$
bits/transmission. We obtain a symmetric-rate upper bound of
$0.45330$ bits/transmission which strictly improves upon the cut-set
bound. Furthermore, we also show that a binary and uniform selection
of the involved auxiliary random variable is sufficient to obtain
our symmetric-rate upper bound.

It should be remarked that the channel we consider in this paper
can be thought of as the discrete counterpart of the channel
considered by Ozarow \cite{OZ:1984}. Although the cut-set bound
was shown to be tight for the two-user Gaussian MAC-FB, our result
shows that the cut-set bound is not tight for the discrete version
of the additive noisy MAC-FB.

As an application of the properties of the composite functions
developed in this paper, we are able to obtain the entire boundary
of the capacity region of the binary erasure MAC-FB. The
evaluation of the asymmetric rate pairs on the boundary of the
feedback capacity region of the binary erasure MAC was mentioned
as an open problem in \cite{VinckPost:1985}. It was shown in
\cite{WillemsonFB:1984} that a binary and uniform auxiliary random
variable $T$ is sufficient to attain the sum-rate point on the
capacity region of the binary erasure MAC-FB.  We show here that
this is also the case for any asymmetric rate point on the
boundary of the feedback capacity region. This result also
complements the work of Kramer \cite{KramerMAC:1999}, where
feedback strategies were developed for the binary erasure MAC-FB
and it was shown that these strategies achieve all rates yielded
by a binary selection of the auxiliary random variable $T$ in the
capacity region. Our result hence shows in effect that the
feedback strategies developed in \cite{KramerMAC:1999} for binary
erasure MAC are optimal and capacity achieving.

\section{System Model}
A discrete memoryless two-user MAC-FB (see Figure 1) is defined by
the following: two input alphabets $\mathcal{X}_{1}$ and
$\mathcal{X}_{2}$, an output alphabet $\mathcal{Y}$, and the channel
defined by a probability transition function $p(y|x_{1},x_{2})$ for
all $(x_{1},x_{2},y)\in\mathcal{X}_{1}\times\mathcal{X}_{2}\times
\mathcal{Y}$. A $(n,M_{1},M_{2},P_{e})$ code for the MAC-FB consists
of two sets of encoding functions $f_{1i},f_{2i}$ for $i=1,\ldots,n$
and a decoding function $g$
\begin{align}
f_{1i}&:\mathcal{M}_{1}\times\mathcal{Y}^{i-1}\rightarrow
\mathcal{X}_{1}, \quad i=1,\ldots,n\nonumber\\
f_{2i}&:\mathcal{M}_{2}\times\mathcal{Y}^{i-1}\rightarrow
\mathcal{X}_{2}, \quad i=1,\ldots,n\nonumber\\
g&: \mathcal{Y}^{n} \rightarrow \mathcal{M}_{1} \times
\mathcal{M}_{2}\nonumber
\end{align}
The two transmitters produce independent and uniformly distributed
messages $W_{1} \in \{1,\ldots,\break M_{1}\}$ and $W_{2} \in
\{1,\ldots,M_{2}\}$, respectively, and transmit them through $n$
channel uses. The average error probability is defined as
$P_{e}=Pr(g(Y^{n})\neq (W_{1},W_{2}))$. A rate pair
$(R_{1},R_{2})$ is said to be achievable for MAC-FB if for any
$\epsilon \geq 0$, there exists a pair of  $n$ encoding functions
$\{f_{1i}\}_{i=1}^{n}$, $\{f_{2i}\}_{i=1}^{n}$, and a decoding
function $g$ such that $R_{1}\leq \text{log}(M_{1})/n$, $R_{2}\leq
\text{log}(M_{2})/n$ and $P_{e}\leq \epsilon$ for sufficiently
large $n$. The capacity region of MAC-FB is the closure of the set
of all achievable rate pairs $(R_{1},R_{2})$.

\section{Cut-Set Outer Bound for MAC-FB}
By applying Theorem 14.10.1 in \cite{Cover:book}, the cut-set
outer bound on the capacity region of MAC-FB can be obtained as:
\begin{align}
\mathcal{CS} = \Big\{(R_{1},R_{2}):\hspace{0.05in} &R_{1}\leq I(X_{1};Y|X_{2})\label{CS1}\\
&R_{2}\leq I(X_{2};Y|X_{1})\label{CS2}\\
&R_{1}+R_{2}\leq I(X_{1},X_{2};Y)\label{CS3}\Big\}
\end{align}
where the random variables $(X_{1},X_{2},Y)$ have the joint
distribution
\begin{align}
p(x_{1},x_{2},y)=p(x_{1},x_{2})p(y|x_{1},x_{2})\label{CS4}
\end{align}
The cut-set outer bound allows all input distributions
$p(x_{1},x_{2})$, which makes it seemingly loose since an
achievable scheme might not achieve arbitrary correlation and
rates given by the cut-set bound. Our aim is to restrict the set
of allowable input distributions by using a dependence balance
approach.

\begin{figure}[t]
  \centering
  \centerline{\epsfig{figure=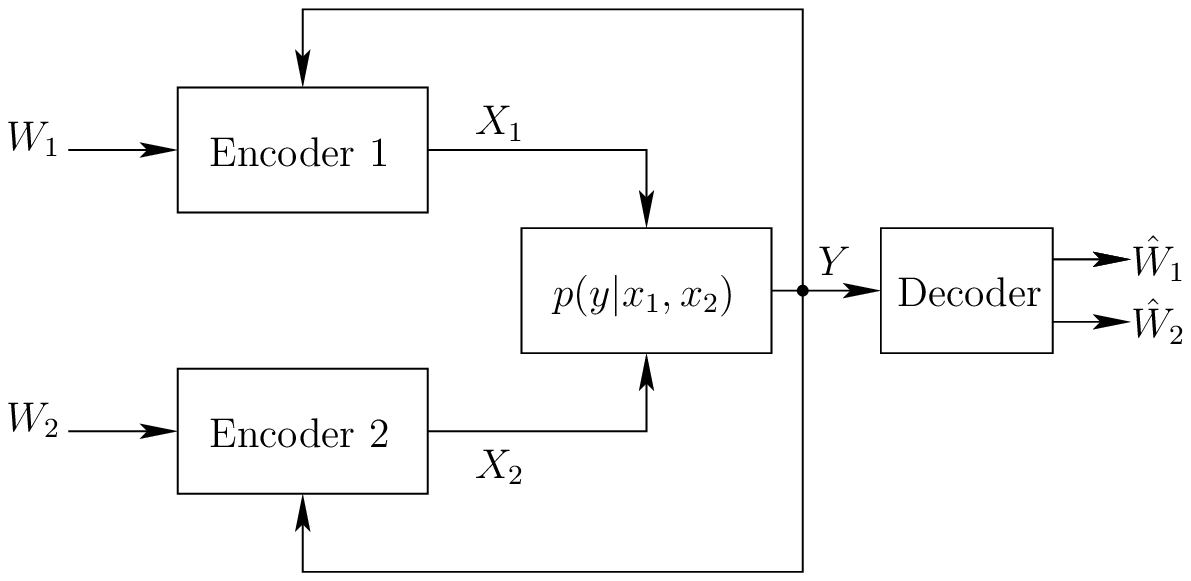,width=9cm}}
  { Figure 1: The multiple access channel with noiseless feedback (MAC-FB).}\medskip
\end{figure}
\section{Dependence Balance Outer Bound for
MAC-FB} Hekstra and Willems \cite{Hekstra_Willems:1989} showed that
the capacity region of MAC-FB is contained within $\mathcal{DB}$,
where
\begin{align}
\mathcal{DB}= \Big\{(R_{1},R_{2}):\hspace{0.05in} &R_{1}\leq I(X_{1};Y|X_{2},T)\label{DBa1}\\
&R_{2}\leq I(X_{2};Y|X_{1},T)\label{DBa2}\\
&R_{1}+R_{2}\leq I(X_{1},X_{2};Y)\Big\}\label{DBa3}
\end{align}
where the random variables $(X_{1},X_{2},Y,T)$ have the joint
distribution
\begin{align}
p(t,x_{1},x_{2},y)=p(t)p(x_{1},x_{2}|t)p(y|x_{1},x_{2})\label{DBa4}
\end{align}
and also satisfy the following dependence balance bound
\begin{align}
I(X_{1};X_{2}|T)\leq I(X_{1};X_{2}|Y,T)\label{DBa5}
\end{align}
where $T$ is subject to a cardinality constraint of
$|\mathcal{T}|\leq |\mathcal{X}_{1}||\mathcal{X}_{2}|+2$. The
dependence balance bound restricts the set of input distributions
in the sense that it allows only those input distributions
$p(t,x_{1},x_{2})$ which satisfy (\ref{DBa5}). It should be noted
that by ignoring the constraint in (\ref{DBa5}), one obtains the
cut-set bound.

\section{Adaptive Parallel Channel Extension of the Dependence Balance Bound}
In \cite{Hekstra_Willems:1989}, Hekstra and Willems also developed
an adaptive parallel channel extension for the dependence balance
bound which is given as follows: Let $\Delta(\mathcal{U})$ denote
the set of all distributions of $U$ and
$\Delta(\mathcal{U}|\mathcal{V})$ denote the set of all conditional
distributions of $U$ given $V$. Then for any mapping
$F:\Delta(\mathcal{X}_{1}\times\mathcal{X}_{2})\rightarrow
\Delta(\mathcal{Z}|\mathcal{X}_{1}\times\mathcal{X}_{2}\times\mathcal{Y})$,
the capacity region of the MAC-FB is contained in
\begin{align}
\mathcal{DB}_{PC}= \Big\{(R_{1},R_{2}):\hspace{0.05in}&R_{1}\leq I(X_{1};Y,Z|X_{2},T)\label{DBPC1}\\
&R_{2}\leq I(X_{2};Y,Z|X_{1},T)\label{DBPC2}\\
&R_{1}\leq I(X_{1};Y|X_{2})\label{DBPC3}\\
&R_{2}\leq I(X_{2};Y|X_{1})\label{DBPC4}\\
&R_{1}+R_{2}\leq I(X_{1},X_{2};Y)\label{DBPC5}\\
&R_{1}+R_{2}\leq I(X_{1},X_{2};Y,Z|T) \Big\}\label{DBPC6}
\end{align}
where the random variables $(X_{1},X_{2},Y,Z,T)$ have the joint
distribution
\begin{align}
&p(t,x_{1},x_{2},y,z)=p(t)p(x_{1},x_{2}|t)p(y|x_{1}x_{2})p^{+}(z|x_{1},x_{2},y,t)\label{DBPC7}
\end{align}
such that for all $t$
\begin{align}
&p^{+}(z|x_{1},x_{2},y,t)=F(p_{X_{1}X_{2}}(x_{1},x_{2}|t))\label{DBPC8}
\end{align}
and such that
\begin{align}
&I(X_{1};X_{2}|T)\leq I(X_{1};X_{2}|Y,Z,T)\label{DBPC9}
\end{align}
where $T$ is subject to a cardinality bound of $|\mathcal{T}|\leq
|\mathcal{X}_{1}||\mathcal{X}_{2}|+3$.

We should remark that the parallel channel (defined by
$p^{+}(z|x_{1},x_{2},y,t)$) is selected apriori, and for every
choice of the parallel channel, one obtains an outer bound on the
capacity region of MAC-FB, which is in general tighter than the
cut-set bound. The set of allowable input distributions
$p(t,x_{1},x_{2})$ are those which satisfy the constraint in
(\ref{DBPC9}). Also note that only the right hand side of
(\ref{DBPC9}), i.e., only $I(X_{1};X_{2}|Y,Z,T)$, depends on the
choice of the parallel channel. By carefully selecting
$p^{+}(z|x_{1},x_{2},y,t)$, one can reduce $I(X_{1};X_{2}|Y,Z,T)$,
thereby making the constraint in (\ref{DBPC9}) more stringent,
consequently reducing the set of allowable input distributions. To
obtain an improvement over the cut-set bound, we need to select a
``good" parallel channel such that it restricts the input
distributions to a small allowable set and yields small values of
$I(X_{1};Z|Y,X_{2},T)$ and $I(X_{2};Z|Y,X_{1},T)$ at the same
time. These two mutual information ``leak" terms are the extra
terms that appear in (\ref{DBPC1}) and (\ref{DBPC2}) relative to
the rates appearing in (\ref{DBa1}) and (\ref{DBa2}),
respectively.

To motivate the choice of our particular parallel channel, first
consider a trivial choice of $Z$: $Z=\phi$ (a constant). For this
choice of $Z$, (\ref{DBPC9}) reduces to (\ref{DBa5}) and we are
not restricting the set of allowable input distributions any more
than the $\mathcal{DB}$ bound. Moreover, for a constant selection
of $Z$, (\ref{DBPC1}) and (\ref{DBPC2}) reduce to (\ref{DBa1}) and
(\ref{DBa2}), respectively. Thus, a constant selection of $Z$ for
$\mathcal{DB}_{PC}$ is equivalent to $\mathcal{DB}$ itself.

Also note that the smallest value of $I(X_{1};X_{2}|Y,Z,T)$ is zero.
Thus, it follows that if we select a parallel channel such that
$I(X_{1};X_{2}|Y,Z,T)=0$ for every input distribution
$p(t,x_{1},x_{2})$, then $I(X_{1};X_{2}|T)=0$ by (\ref{DBPC9}).
Hence, the smallest set of input distributions permissable by
$\mathcal{DB}_{PC}$ consists of those $p(t,x_{1},x_{2})$ for which
$X_{1}$ and $X_{2}$ are conditionally independent given $T$.
Furthermore, for a parallel channel such that
$I(X_{1};X_{2}|Y,Z,T)=0$, the bound in (\ref{DBPC6}) is redundant.
This can be seen from:
\begin{align}
0&=I(X_{1};X_{2}|T)-I(X_{1};X_{2}|Y,Z,T)\nonumber\\
&=I(X_{1};Y,Z|T)-I(X_{1};Y,Z|X_{2},T)\nonumber\\
&=I(X_{1},X_{2};Y,Z|T)-I(X_{1};Y,Z|X_{2},T)-I(X_{2};Y,Z|X_{1},T)\label{DBPC10}
\end{align}
Using (\ref{DBPC10}), it is clear that the sum of constraints
(\ref{DBPC1}) and (\ref{DBPC2}) is at least as strong as the
constraint (\ref{DBPC6}). This shows that (\ref{DBPC6}) is
redundant for the class of parallel channels where
$I(X_{1};X_{2}|Y,Z,T)=0$.

\section{Binary Additive Noisy MAC-FB}
In this paper, we will consider a binary-input additive noisy MAC
given by
\begin{align}
Y&=X_{1}+X_{2}+N\label{B1}
\end{align}
where $N$ is binary, uniform over $\{0,1\}$ and is independent of
$X_{1}$ and $X_{2}$. The channel output $Y$ takes values from the
set $\mathcal{Y}=\{0,1,2,3\}$. This channel does not fall into any
class of MAC for which the feedback capacity region is known. This
channel was also considered by Kramer in \cite{Kramer:thesis,
Kramer:2003} where it was shown that the Cover-Leung achievable
rate is strictly sub-optimal for the sum-rate.

We select a parallel channel $p^{+}(z|x_{1},x_{2},y)$ such that
$I(X_{1};X_{2}|Y,Z,T)=0$. By (\ref{DBPC9}), this will imply
$I(X_{1};X_{2}|T)=0$, and hence only distributions of the type
$p(t,x_{1},x_{2})=p(t)p(x_{1}|t)p(x_{2}|t)$ will be allowed. By
doing so, we restrict the set of allowable input distributions to
be the smallest permitted by $\mathcal{DB}_{PC}$, although we pay
a penalty due to the positive ``leak" terms $I(X_{1};Z|Y,X_{2},T)$
and $I(X_{2};Z|Y,X_{1},T)$.

Two simple choices of $Z$ which yield $I(X_{1};X_{2}|Y,Z,T)=0$ are
$Z=X_{1}$ and $Z=X_{2}$. For each of these choices, the
corresponding outer bounds are,
\begin{align}
\mathcal{DB}_{PC}^{(1)}=\Big\{(R_{1},R_{2}): \hspace{0.05in}&R_{1}\leq I(X_{1};Y|X_{2},T)+H(X_{1}|Y,X_{2},T)\label{B2}\\
&R_{2}\leq I(X_{2};Y|X_{1},T)\label{B3}\\
&R_{1}\leq I(X_{1};Y|X_{2})\label{B4}\\
&R_{1}+R_{2}\leq  I(X_{1},X_{2};Y)\label{B5}\Big\}
\end{align}
and
\begin{align}
\mathcal{DB}_{PC}^{(2)}=\Big\{(R_{1},R_{2}): \hspace{0.05in}&R_{1}\leq I(X_{1};Y|X_{2},T)\label{B6}\\
&R_{2}\leq I(X_{2};Y|X_{1},T)+H(X_{2}|Y,X_{1},T)\label{B7}\\
&R_{2}\leq I(X_{2};Y|X_{1})\label{B8}\\
&R_{1}+R_{2}\leq  I(X_{1},X_{2};Y)\label{B9}\Big\}
\end{align}
where both $\mathcal{DB}_{PC}^{(1)}$ and $\mathcal{DB}_{PC}^{(2)}$
are evaluated over the set of input distributions of the form
$p(t,x_{1},x_{2})=p(t)p(x_{1}|t)p(x_{2}|t)$.

For the binary additive noisy MAC-FB in consideration which is
given in (\ref{B1}), the following equalities hold for any
distribution of the form
$p(t,x_{1},x_{2})=p(t)p(x_{1}|t)p(x_{2}|t)$,
\begin{align}
H(X_{1}|Y,X_{2},T)&=\frac{1}{2}H(X_{1}|T)\label{B10}\\
H(X_{2}|Y,X_{1},T)&=\frac{1}{2}H(X_{2}|T)\label{B11}
\end{align}
Using (\ref{B10}) and (\ref{B11}), we can simplify
$\mathcal{DB}_{PC}^{(1)}$ and $\mathcal{DB}_{PC}^{(2)}$ as,
\begin{align}
\mathcal{DB}_{PC}^{(1)}=\Big\{(R_{1},R_{2}): \hspace{0.05in}&R_{1}\leq \min\left(I(X_{1};Y|X_{2}),H(X_{1}|T)\right)\label{B12}\\
&R_{2}\leq \frac{1}{2}H(X_{2}|T)\label{B13}\\
&R_{1}+R_{2}\leq  I(X_{1},X_{2};Y)\label{B14}\Big\}
\end{align}
and
\begin{align}
\mathcal{DB}_{PC}^{(2)}=\Big\{(R_{1},R_{2}): \hspace{0.05in}&R_{1}\leq \frac{1}{2}H(X_{1}|T)\label{B15}\\
&R_{2}\leq \min\left(I(X_{2};Y|X_{1}),H(X_{2}|T)\right)\label{B16}\\
&R_{1}+R_{2}\leq  I(X_{1},X_{2};Y)\label{B17}\Big\}
\end{align}
where both bounds are evaluated over the set of distributions of
the form $p(t,x_{1},x_{2})=p(t)p(x_{1}|t)p(x_{2}|t)$ and the
auxiliary random variable $T$ is subject to a cardinality
constraint of $|\mathcal{T}|\leq
|\mathcal{X}_{1}||\mathcal{X}_{2}|+3$. The evaluation of the above
outer bounds is rather cumbersome because for binary inputs, the
bound on $|\mathcal{T}|$ is $|\mathcal{T}|\leq 7$. To the best of
our knowledge, no one has been able to conduct an exhaustive
search over an auxiliary random variable whose cardinality is
larger than $4$. In Section $8$, we will obtain an alternate
characterization for our outer bounds using composite functions
and their properties. For that, we will first develop some useful
properties of composite functions in the next section.

A valid outer bound is given by the intersection of
$\mathcal{DB}_{PC}^{(1)}$ and $\mathcal{DB}_{PC}^{(2)}$,
\begin{align}
\mathcal{DB}_{PC}&=\mathcal{DB}_{PC}^{(1)}\bigcap\mathcal{DB}_{PC}^{(2)}\label{B18}
\end{align}
We will show that this outer bound is strictly smaller than the
cut-set bound at all points on the capacity region where feedback
increases capacity.

\section{Composite Functions and Their Properties}
Before obtaining a characterization of our outer bounds, we will
define a composite function and prove two lemmas regarding its
properties. These lemmas will be essential in obtaining simple
characterizations for our outer bounds and the Cover-Leung
achievable rate region. Throughout the paper, we will refer to the
entropy function as $h^{(k)}(s_{1},\ldots s_{k})$ which is defined
as,
\begin{align}
h^{(k)}(s_{1},\ldots,s_{k})&=-\sum_{i=1}^{k}s_{i}\text{log}(s_{i})\label{CF1}
\end{align}
for $s_{i}\geq 0$, $i=1\ldots,k$, and $\sum_{i=1}^{k}s_{i}=1$,
where all logarithms are to the base $2$. We will denote
$h^{(2)}(s)$ simply as $h(s)$. To characterize our bounds, we will
make use of the following function
\begin{align}
\phi(s)&=\left\{
           \begin{array}{ll}
             \frac{1-\sqrt{1-2s}}{2}, & \hbox{for $0\leq s\leq 1/2$} \\
             \frac{1-\sqrt{2s-1}}{2}, & \hbox{for $1/2 < s\leq 1$}
           \end{array}
         \right.\label{CF2}
\end{align}
It was shown in \cite{WillemsonFB:1984} that the composite function
$h(\phi(s))$ is symmetric around $s=1/2$ and concave in $s$ for
$0\leq s\leq 1$. The functions $\phi(s)$ and $h(\phi(s))$ are
illustrated in Figure $2$. From the definition of $\phi(s)$ in
(\ref{CF2}) it is clear that for any $s\in [0,1]$, the function
$\phi(s)$ satisfies the following property
\begin{align}
\phi(2s(1-s))&=\min(s,1-s)\label{CF3}
\end{align}
As a consequence, the following holds as well
\begin{align}
h(\phi(2s(1-s)))&=h(s)\label{CF4}
\end{align}
For any $s \in [0,1]$, the following holds from the definition of
$\phi(s)$,
\begin{align}
s&=\left\{%
\begin{array}{ll}
    \phi(2s(1-s)), & \hbox{$0\leq s \leq \frac{1}{2}$} \\
    1-\phi(2s(1-s)), & \hbox{$\frac{1}{2} < s \leq 1$}
    \label{CF5}
\end{array}%
\right.
\end{align}
For any $x \in[0,\frac{1}{2}]$ and $y \in[0,\frac{1}{2}]$, let us
define a function
\begin{align}
f(x,y)&\triangleq \phi(x)+\phi(y)-2\phi(x)\phi(y)\label{52}\\
&=\frac{1-\sqrt{(1-2x)(1-2y)}}{2}\label{53}
\end{align}
From the above definition, it is clear that the function $f(x,y)$
lies in the range $[0,\frac{1}{2}]$.
\begin{figure}[t]
  \centering
  \centerline{\epsfig{figure=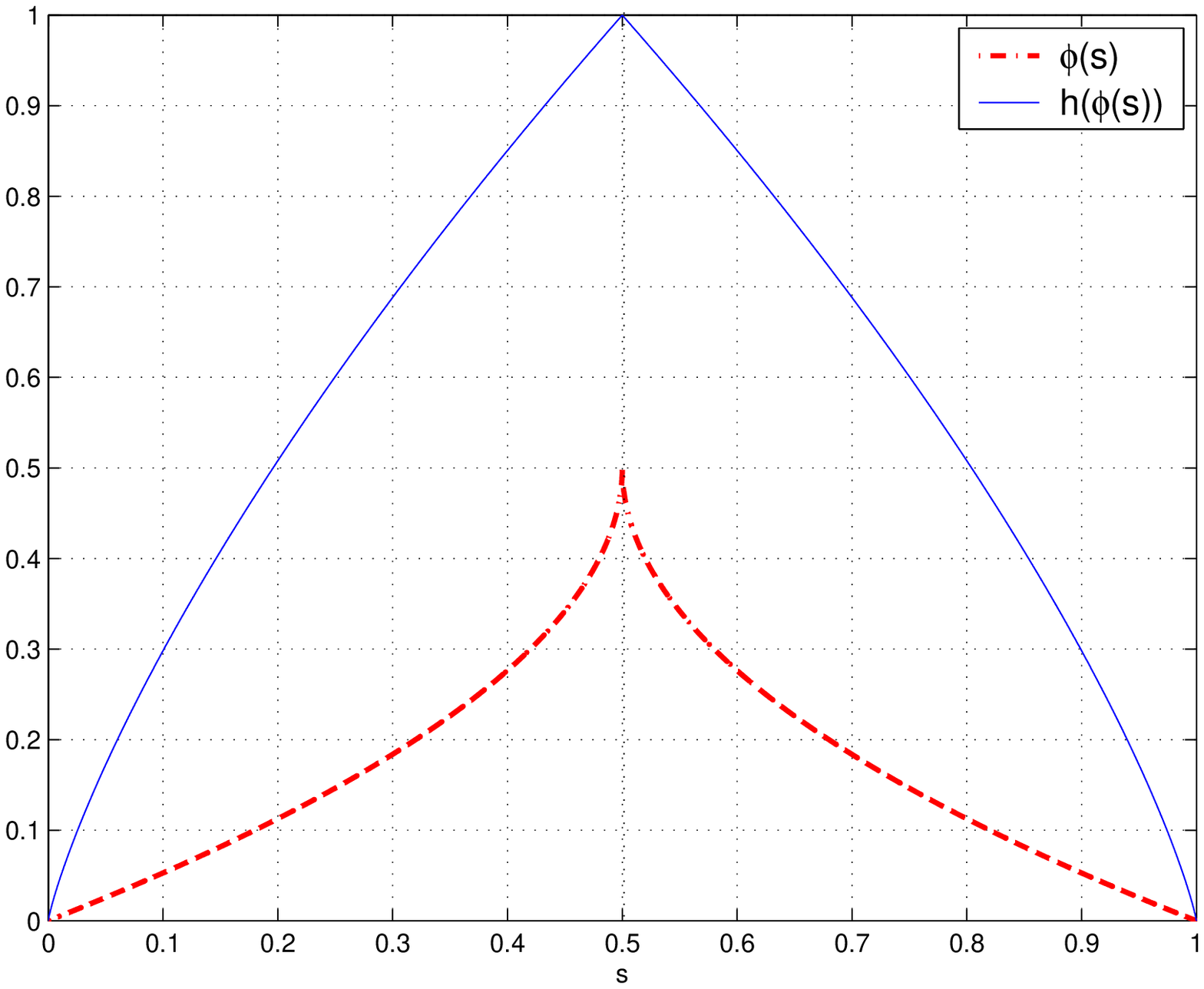,width=8cm}}
  { Figure 2: Functions $\phi(s)$ and $h(\phi(s))$.}\medskip
\end{figure}
\begin{Lem} The variable
\begin{align}
v&=s_{1}+s_{2}-2s_{1}s_{2}\label{54}
\end{align}
is always lower bounded by $f(2s_{1}(1-s_{1}),2s_{2}(1-s_{2}))$ for
any $s_{1}\in[0,1],s_{2}\in[0,1]$.
\end{Lem}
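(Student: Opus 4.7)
The plan is to prove the inequality by direct algebraic simplification rather than by calculus or optimization. The key observation is that when we plug $2s(1-s)$ into the expression $1-2(\cdot)$ appearing inside the square root in $f$, we get a perfect square: $1 - 2\cdot 2s(1-s) = 1 - 4s + 4s^2 = (1-2s)^2$. This collapses the square root and makes the resulting inequality essentially trivial.

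Concretely, I would first substitute and use the identity above to compute
\begin{align}
f\bigl(2s_{1}(1-s_{1}),\,2s_{2}(1-s_{2})\bigr)
&= \frac{1-\sqrt{(1-2s_{1})^{2}(1-2s_{2})^{2}}}{2} \nonumber\\
&= \frac{1-|1-2s_{1}|\,|1-2s_{2}|}{2}.\nonumber
\end{align}
Next, I would rewrite $v$ in a matching form by expanding
\begin{align}
2v &= 2s_{1}+2s_{2}-4s_{1}s_{2}= 1-(1-2s_{1})(1-2s_{2}).\nonumber
\end{align}
So the claim $v \geq f(2s_{1}(1-s_{1}),2s_{2}(1-s_{2}))$ becomes, after multiplying by $2$ and rearranging,
\begin{align}
|1-2s_{1}|\,|1-2s_{2}| \;\geq\; (1-2s_{1})(1-2s_{2}).\nonumber
\end{align}
This holds for all real numbers, since $|ab|\geq ab$ always, with equality precisely when $(1-2s_{1})(1-2s_{2})\geq 0$, i.e., when $s_{1},s_{2}$ lie on the same side of $1/2$.

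There is no real obstacle here; the only thing to be careful about is ensuring that the arguments $2s_{i}(1-s_{i})$ indeed lie in $[0,1/2]$ so that $f$ is defined, but this is immediate since $2s(1-s)$ is maximized at $s=1/2$ with value $1/2$. I would conclude by noting when equality holds, which will be useful later when this lemma is applied within the evaluation of $\mathcal{DB}_{PC}^{(1)}$ and $\mathcal{DB}_{PC}^{(2)}$.
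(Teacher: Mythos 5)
Your proof is correct, and it reaches the conclusion by a cleaner route than the paper. The paper's proof splits $[0,1]^2$ into four quadrants according to whether each $s_i$ is below or above $1/2$, uses the identity $s=\phi(2s(1-s))$ or $s=1-\phi(2s(1-s))$ (equation (\ref{CF5})) in each quadrant, and in the two mixed cases falls back on the separate bound $f\leq\frac{1}{2}$ to get $v=1-f\geq f$. You instead exploit the same underlying algebraic fact --- that $1-2\cdot 2s(1-s)=(1-2s)^2$ collapses the square root to $|1-2s|$ --- but push it all the way through, reducing the entire lemma to the single inequality $|1-2s_{1}|\,|1-2s_{2}|\geq(1-2s_{1})(1-2s_{2})$, i.e., $|ab|\geq ab$. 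This eliminates the case analysis, and as a bonus makes the equality condition (both $s_i$ on the same side of $1/2$) immediate, which is exactly the structure the paper's cases 1--4 are implicitly tracking. Your closing check that $2s(1-s)\in[0,\tfrac{1}{2}]$ so that the closed form (\ref{53}) of $f$ applies is the right point of care. No gaps.
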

\begin{proof} We will prove this lemma by considering all four possible
cases.
\begin{enumerate}
    \item If
    $s_{1}\in[0,\frac{1}{2}],s_{2}\in[0,\frac{1}{2}]$, then from (\ref{CF5}), $s_{1}=\phi(2s_{1}(1-s_{1}))$,
    $s_{2}=\phi(2s_{2}(1-s_{2}))$ and hence
    \begin{align}
    v=f(2s_{1}(1-s_{1}),2s_{2}(1-s_{2}))
    \end{align}
    \item If
    $s_{1}\in[\frac{1}{2},1],s_{2}\in[\frac{1}{2},1]$, then from (\ref{CF5}), $s_{1}=1-\phi(2s_{1}(1-s_{1}))$,
    $s_{2}=1-\phi(2s_{2}(1-s_{2}))$ and hence
    \begin{align}
    v=f(2s_{1}(1-s_{1}),2s_{2}(1-s_{2}))
    \end{align}
    \item If
    $s_{1}\in[0,\frac{1}{2}],s_{2}\in[\frac{1}{2},1]$, then from (\ref{CF5}), $s_{1}=\phi(2s_{1}(1-s_{1}))$,
    $s_{2}=1-\phi(2s_{2}(1-s_{2}))$ and hence
    \begin{align}
    v&=1-f(2s_{1}(1-s_{1}),2s_{2}(1-s_{2}))\nonumber\\
         &\overset{(a)}\geq f(2s_{1}(1-s_{1}),2s_{2}(1-s_{2}))
    \end{align} where $(a)$ follows by the fact that $f(2s_{1}(1-s_{1}),2s_{2}(1-s_{2}))\leq
    \frac{1}{2}$.
    \item If
    $s_{1}\in[\frac{1}{2},1],s_{2}\in[0,\frac{1}{2}]$, then from (\ref{CF5}), $s_{1}=1-\phi(2s_{1}(1-s_{1}))$,
    $s_{2}=\phi(2s_{2}(1-s_{2}))$ and hence
    \begin{align}
    v&=1-f(2s_{1}(1-s_{1}),2s_{2}(1-s_{2}))\nonumber\\
         &\overset{(b)}\geq f(2s_{1}(1-s_{1}),2s_{2}(1-s_{2}))
    \end{align} where $(b)$ follows by the fact that $f(2s_{1}(1-s_{1}),2s_{2}(1-s_{2}))\leq
    \frac{1}{2}$.
\end{enumerate}
Thus, for any pair $(s_{1},s_{2})$, where $s_{1}\in[0,1]$,
$s_{2}\in [0,1]$, we have shown that $v\geq
f(2s_{1}(1-s_{1}),2s_{2}(1-s_{2}))$.
\end{proof}

\begin{Lem} The function
$f(x,y)$ is jointly convex in $(x,y)$ for $0\leq x\leq
\frac{1}{2},0\leq y\leq \frac{1}{2} $.\end{Lem}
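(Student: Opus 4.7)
The plan is to recognize that the joint convexity of $f$ reduces, after the affine reparametrization $a = 1-2x$, $b = 1-2y$, to the joint concavity of the geometric mean $g(a,b) = \sqrt{ab}$ on the nonnegative orthant. Since affine change of variables preserves convexity/concavity, and since $f(x,y) = \frac{1}{2} - \frac{1}{2}g(a,b)$, convexity of $f$ is equivalent to concavity of $g$ on $[0,1]^2$ (the image of $[0,1/2]^2$ under this map).

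I would then prove concavity of $g(a,b) = \sqrt{ab}$ via the two-term Cauchy--Schwarz inequality, which is the slickest route and avoids any issues at the boundary where the function fails to be differentiable. Specifically, for $\lambda \in [0,1]$ and $(a_1,b_1),(a_2,b_2) \in [0,\infty)^2$, applying Cauchy--Schwarz with $u_i = \sqrt{\lambda_i a_i}$, $v_i = \sqrt{\lambda_i b_i}$ (where $\lambda_1 = \lambda$, $\lambda_2 = 1-\lambda$) yields
\begin{align}
\bigl(\lambda \sqrt{a_1 b_1} + (1-\lambda)\sqrt{a_2 b_2}\bigr)^2 \leq \bigl(\lambda a_1 + (1-\lambda) a_2\bigr)\bigl(\lambda b_1 + (1-\lambda) b_2\bigr),\nonumber
\end{align}
and taking square roots gives exactly the midpoint inequality $\lambda g(a_1,b_1) + (1-\lambda) g(a_2,b_2) \leq g(\lambda a_1+(1-\lambda)a_2,\lambda b_1+(1-\lambda)b_2)$.

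An alternative I would mention is a direct Hessian calculation: on the open region $0 \leq x < 1/2$, $0 \leq y < 1/2$, one can compute $f_{xx} = \frac{1}{2}\sqrt{(1-2y)/(1-2x)^3} \geq 0$, $f_{yy} = \frac{1}{2}\sqrt{(1-2x)/(1-2y)^3} \geq 0$, and $f_{xx}f_{yy} - f_{xy}^2 = 0$, so the Hessian is positive semidefinite and $f$ is convex on the open region. Convexity then extends to the closed region $[0,1/2]^2$ by continuity of $f$.

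The main (minor) obstacle is the boundary: the Hessian approach requires handling the non-smoothness at $x = 1/2$ or $y = 1/2$, where $f$ is still continuous but the partial derivatives blow up. The Cauchy--Schwarz argument avoids this entirely since it is purely algebraic and valid on the closed region, so that is the route I would write up.
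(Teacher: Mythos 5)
Your proof is correct, and your preferred route is genuinely different from the paper's. The paper proves the lemma exactly by your \emph{alternative} argument: it writes down the Hessian of $f$, observes that its determinant vanishes and its trace equals $\frac{1}{2}\bigl(\frac{\sqrt{1-2y}}{(1-2x)^{3/2}}+\frac{\sqrt{1-2x}}{(1-2y)^{3/2}}\bigr)\geq 0$, so the eigenvalues are $0$ and a nonnegative number, and concludes positive semidefiniteness. Your primary route --- the affine substitution $a=1-2x$, $b=1-2y$ reducing the claim to concavity of the geometric mean $\sqrt{ab}$, proved by Cauchy--Schwarz --- buys two things the paper's proof does not: it is purely algebraic, so it covers the closed square $[0,\tfrac12]^2$ without any appeal to continuity, and it makes transparent \emph{why} the Hessian is rank-deficient (the geometric mean is positively homogeneous of degree one, hence affine along rays). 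You correctly flag that the Hessian computation is only valid on the open region where $x<\tfrac12$ and $y<\tfrac12$ (the entries blow up on the boundary), a point the paper's proof glosses over; the extension to the closed square by continuity is routine but does need to be said. One cosmetic quibble: the inequality you derive from Cauchy--Schwarz holds for all $\lambda\in[0,1]$, so it is the full convexity inequality rather than merely the ``midpoint inequality'' --- which is what you want anyway, since midpoint convexity alone would require a continuity argument to upgrade.
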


\begin{proof}
Showing that the function $f(x,y)$ is jointly convex in $(x,y)$ is
equivalent to showing that the Hessian matrix, $H$ of $f(x,y)$ is
positive semi-definite, which is equivalent to showing that the
eigenvalues of $H$ are non-negative. The Hessian matrix, $H$, of
$f(x,y)$ is
\begin{align}
H&=\left(
        \begin{array}{cc}
          \frac{\sqrt{1-2y}}{2(1-2x)^{3/2}} & \frac{-1}{2\sqrt{(1-2x)(1-2y)}} \\
          \frac{-1}{2\sqrt{(1-2x)(1-2y)}} & \frac{\sqrt{1-2x}}{2(1-2y)^{3/2}} \\
        \end{array}
      \right)
\end{align}
The two eigenvalues of $H$ are
\begin{align}
\lambda_{1}&=0\nonumber\\
\lambda_{2}&=\frac{1}{2}\Bigg(\frac{\sqrt{1-2y}}{(1-2x)^{3/2}}+\frac{\sqrt{1-2x}}{(1-2y)^{3/2}}\Bigg)
\end{align}
which are non-negative for all $0\leq x \leq \frac{1}{2}$ and $0\leq
y \leq \frac{1}{2}$, thus completing the proof.\end{proof}

\section{Evaluation of the Dependence Balance Outer Bound}
We will now return to the characterization of our upper bounds
$\mathcal{DB}_{PC}^{(1)}$ and $\mathcal{DB}_{PC}^{(2)}$. Let the
cardinality of the auxiliary random variable $T$ be fixed and
arbitrary, say $|\mathcal{T}|$. Then, the joint distribution
$p(t)p(x_{1}|t)p(x_{2}|t)$ can be described by the following
variables:
\begin{align}
q_{1t}&=\text{Pr}(X_{1} =0|T=t), \hspace{0.2in} t=1,\ldots,|\mathcal{T}|\nonumber\\
q_{2t}&=\text{Pr}(X_{2} =0|T=t), \hspace{0.2in} t=1,\ldots,|\mathcal{T}|\nonumber\\
p_{t}&=\text{Pr}(T=t), \hspace{0.75in}
t=1,\ldots,|\mathcal{T}|\label{EV1}
\end{align}
We will characterize our outer bounds in terms of three variables
$u_{1}$, $u_{2}$ and $u$ which are functions of
$p(t,x_{1},x_{2})$, and are defined as,
\begin{align}
u_{1}&=\sum_{t}p_{t}q_{1t}(1-q_{1t})=\sum_{t}p_{t}u_{1t}\label{EV2}\\
u_{2}&=\sum_{t}p_{t}q_{2t}(1-q_{2t})=\sum_{t}p_{t}u_{2t}\label{EV3}\\
u&=\sum_{t}p_{t}(q_{1t}+q_{2t}-2q_{1t}q_{2t})=\sum_{t}p_{t}u_{t}\label{EV4}
\end{align}
where we have defined
\begin{align}
u_{1t}&=q_{1t}(1-q_{1t})\label{EV5}\\
u_{2t}&=q_{2t}(1-q_{2t})\label{EV6}\\
u_{t}&=q_{1t}+q_{2t}-2q_{1t}q_{2t}\label{EV7}
\end{align}
It should be noted that since $0\leq q_{jt}\leq 1$, for $j=1,2$,
$t=1,\ldots,|\mathcal{T}|$, the variables $u_{1},u_{2},u_{1t}$ and
$u_{2t}$ all lie in the range $[0,\frac{1}{4}]$. Our outer bounds
$\mathcal{DB}_{PC}^{(1)}$ and $\mathcal{DB}_{PC}^{(2)}$ are
comprised of the following information theoretic entities:
\begin{enumerate}
\item $H(X_{1}|T)$, $H(X_{2}|T)$
\item $I(X_{1};Y|X_{2})$, $I(X_{2};Y|X_{1})$
\item $I(X_{1},X_{2};Y)$.
\end{enumerate}
We will first obtain upper bounds for each one of these entities
individually in terms of $(u_{1},u_{2},u)$.

We upper bound $H(X_{1}|T)$ as follows,
\begin{align}
H(X_{1}|T) &=\sum_{t}p_{t}h(q_{1t})\label{EV8}\\
&=\sum_{t}p_{t}h(\phi(2q_{1t}(1-q_{1t})))\label{EV9}\\
&=\sum_{t}p_{t}h(\phi(2u_{1t}))\label{EV10}\\
&\leq h(\phi(2u_{1}))\label{EV11}
\end{align}
where (\ref{EV9}) follows due to (\ref{CF4}), (\ref{EV10}) follows
from (\ref{EV5}), and (\ref{EV11}) follows from the fact that
$h(\phi(s))$ is concave in $s$ and the application of Jensen's
inequality \cite{Cover:book}. Using a similar set of inequalities
for $H(X_{2}|T)$, we obtain
\begin{align}
H(X_{2}|T)&\leq h(\phi(2u_{2}))\label{EV12}
\end{align}

We will now upper bound $I(X_{1};Y|X_{2})$ in terms of the
variable $u$. For this purpose, let us first define
\begin{align}
a&=P_{X_{1}X_{2}}(0,0)=\sum_{t}p_{t}q_{1t}q_{2t}\label{EV13}\\
b&=P_{X_{1}X_{2}}(0,1)=\sum_{t}p_{t}q_{1t}(1-q_{2t})\label{EV14}\\
c&=P_{X_{1}X_{2}}(1,0)=\sum_{t}p_{t}(1-q_{1t})q_{2t}\label{EV15}\\
d&=P_{X_{1}X_{2}}(1,1)=1-a-b-c.\label{EV16}
\end{align}
We now proceed as,
\begin{align}
I(X_{1};Y|X_{2})
=& \hspace{0.05in}H(Y|X_{2})-H(Y|X_{1},X_{2})\label{EV17}\\
=& \hspace{0.05in} H(Y|X_{2})-1\label{EV18}\\
=&\hspace{0.05in}(a+c)h^{(3)}\left(\frac{a}{2(a+c)},\frac{1}{2},\frac{c}{2(a+c)}\right)\nonumber\\
&+(b+d)h^{(3)}\left(\frac{b}{2(b+d)},\frac{1}{2},\frac{d}{2(b+d)}\right)-1\label{EV19}\\
\leq&\hspace{0.05in} h^{(3)}\left(\frac{a+d}{2},\frac{1}{2},\frac{b+c}{2}\right)-1\label{EV20}\\
=&\hspace{0.05in}\frac{1}{2}h(b+c)\label{EV21}\\
=&\hspace{0.05in}\frac{1}{2}h(u)\label{EV22}
\end{align}
where (\ref{EV20}) follows by the concavity of the entropy
function and the application of Jensen's inequality
\cite{Cover:book}. Using a similar set of inequalities, we also
have
\begin{align}
I(X_{2};Y|X_{1})&\leq\frac{1}{2}h(u)\label{EV23}
\end{align}

We will now obtain an upper bound on $I(X_{1},X_{2};Y)$. First note
that
\begin{align}
I(X_{1},X_{2};Y)&=H(Y)-H(Y|X_{1},X_{2})\label{EV24}\\
&=h^{(4)}(P_{Y}(0),P_{Y}(1),P_{Y}(2),P_{Y}(3))-1\label{EV25}
\end{align}
where
\begin{align}
P_{Y}(0)&=\sum_{t}p_{t}q_{1t}q_{2t}/2\label{EV26}\\
P_{Y}(1)&=\sum_{t}p_{t}\big(q_{1t}+q_{2t}-q_{1t}q_{2t}\big)/2\label{EV27}\\
P_{Y}(2)&=\sum_{t}p_{t}\big(1-q_{1t}q_{2t}\big)/2\label{EV28}\\
P_{Y}(3)&=\sum_{t}p_{t}(1-q_{1t})(1-q_{2t})/2\label{EV29}
\end{align}
Using the following fact,
\begin{align}
h^{(4)}(\alpha,\beta,\gamma,\theta)&=\frac{1}{2}h^{(4)}(\alpha,\beta,\gamma,\theta)+\frac{1}{2}h^{(4)}(\theta,\gamma,\beta,\alpha)\label{EV30}\\
&\leq h^{(4)}\left(\frac{\alpha+\theta}{2},\frac{\beta+\gamma}{2},\frac{\beta+\gamma}{2},\frac{\alpha+\theta}{2}\right)\label{EV31}\\
&= h\left(\alpha+\theta\right)+h\left(\frac{1}{2}\right)\label{EV32}\\
&= h\left(1-(\beta+\gamma)\right)+1\label{EV33}
\end{align}
where (\ref{EV31}) follows by the concavity of the entropy
function and the application of Jensen's inequality
\cite{Cover:book}, we now obtain an upper bound on
$I(X_{1},X_{2};Y)$ by continuing from (\ref{EV25}),
\begin{align}
I(X_{1},X_{2};Y)&=h^{(4)}(P_{Y}(0),P_{Y}(1),P_{Y}(2),P_{Y}(3))-1\label{EV34}\\
&\leq h\left(1-(P_{Y}(1)+P_{Y}(2))\right)+h\left(\frac{1}{2}\right)-1\label{EV35}\\
&= h\left(\frac{1-u}{2}\right)\label{EV36}
\end{align}
where (\ref{EV35}) follows by (\ref{EV33}) and (\ref{EV36})
follows from the fact that $P_{Y}(1)+P_{Y}(2)=(1+u)/2$ using
(\ref{EV27}) and (\ref{EV28}),  where $u$ is as defined in
(\ref{EV4}).

\subsection{A Set of Feasible $(u_{1},u_{2},u)$: $\mathcal{P}$}

We have obtained upper bounds on the information theoretic
entities which comprise our outer bounds in terms of three
variables $u_{1},u_{2}$ and $u$. We will now give a feasible
region for these triples based on the structures of these
variables. First, note that for any $q_{1t}\in[0,1]$, the
following holds: $u_{1t}=q_{1t}(1-q_{1t})\leq \frac{1}{4}$.
Similarly, $u_{2t}=q_{2t}(1-q_{2t})\leq \frac{1}{4}$. Hence, we
have
\begin{align}
0&\leq u_{1}\leq \frac{1}{4}\label{P1}\\
0&\leq u_{2}\leq \frac{1}{4}\label{P2}
\end{align}
We now obtain a lower bound on $u$ as
\begin{align}
u&=\sum_{t}p_{t}u_{t}\label{P3}\\
&\geq \sum_{t}p_{t}f(2u_{1t},2u_{2t})\label{P4}\\
&\geq f\left(2\sum_{t}p_{t}u_{1t},2\sum_{t}p_{t}u_{2t}\right)\label{P5}\\
&=f(2u_{1},2u_{2})\label{P6}
\end{align}
where (\ref{P4}) follows by Lemma $1$ and (\ref{P5}) follows by
Lemma $2$ and the application of Jensen's inequality
\cite{Cover:book}. We now obtain another lower bound on $u$,
\begin{align}
u&=\sum_{t}p_{t}u_{t}\label{P7}\\
&=\sum_{t}p_{t}(q_{1t}+q_{2t}-2q_{1t}q_{2t})\label{P8}\\
&=\sum_{t}p_{t}(q_{1t}-q_{1t}^{2}+q_{2t}-q_{2t}^{2}+(q_{1t}-q_{2t})^{2})\label{P9}\\
&\geq \sum_{t}p_{t}(q_{1t}-q_{1t}^{2}+q_{2t}-q_{2t}^{2})\label{P10}\\
&=
\sum_{t}p_{t}q_{1t}(1-q_{1t})+\sum_{t}p_{t}q_{2t}(1-q_{2t})\label{P11}\\
&=u_{1}+u_{2}\label{P12}
\end{align}
Finally, we obtain an upper bound on $u$ in terms of $u_{1}$ and
$u_{2}$,
\begin{align}
u&=\sum_{t}p_{t}u_{t}\label{P13}\\
&=\sum_{t}p_{t}(q_{1t}+q_{2t}-2q_{1t}q_{2t})\label{P14}\\
&=\sum_{t}p_{t}(q_{1t}+q_{2t}-2q_{1t}q_{2t}+q_{1t}^{2}+(1-q_{2t})^{2}-q_{1t}^{2}-(1-q_{2t})^{2})\label{P15}\\
&\leq \sum_{t}p_{t}(q_{1t}+q_{2t}-2q_{1t}q_{2t}+q_{1t}^{2}+(1-q_{2t})^{2}-2q_{1t}(1-q_{2t}))\label{P16}\\
&=1-(u_{1}+u_{2})\label{P17}
\end{align}
where (\ref{P16}) follows by the inequality
$q_{1t}^{2}+(1-q_{2t})^{2}\geq 2q_{1t}(1-q_{2t})$.

By noting
\begin{align}
f(2u_{1},2u_{2})-(u_{1}+u_{2})&=\frac{1-\sqrt{(1-4u_{1})(1-4u_{2})}}{2}-(u_{1}+u_{2})\label{P18}\\
&=\frac{(1-4u_{1})+(1-4u_{2})-2\sqrt{(1-4u_{1})(1-4u_{2})}}{4}\label{P19}\\
&=\frac{(\sqrt{1-4u_{1}}-\sqrt{1-4u_{2}})^{2}}{4}\label{P20}\\
&\geq 0\label{P21}
\end{align}
and using (\ref{P6}), we note that the lower bound in (\ref{P12})
is redundant. Therefore, from (\ref{P6}) and (\ref{P17}), we have
the following feasible range for the variable $u$ in terms of
$u_{1}$ and $u_{2}$,
\begin{align}
f(2u_{1},2u_{2})\leq u \leq 1-(u_{1}+u_{2})\label{P22}
\end{align}
Combining (\ref{P1}), (\ref{P2}) and (\ref{P22}), a set of feasible
$(u_{1},u_{2},u)$ is given as follows,
\begin{align}
\mathcal{P}=\Big\{&(u_{1},u_{2},u): 0\leq u_{1}\leq
\frac{1}{4};0\leq u_{2}\leq \frac{1}{4}; f(2u_{1},2u_{2})\leq u \leq
1-(u_{1}+u_{2})\Big\}\label{P23}
\end{align}
It should be noted that the set $\mathcal{P}$ in (\ref{P23}) may
not necessarily be the smallest feasible set of all triples
$(u_{1},u_{2},u)$. Since we are interested in a maximization over
these set of triples, a possibly larger set $\mathcal{P}$
suffices.

\subsection{A Simple Characterization of $\mathcal{DB}_{PC}^{(1)}$ and $\mathcal{DB}_{PC}^{(2)}$}

Using the upper bounds on $H(X_{1}|T)$, $H(X_{2}|T)$,
$I(X_{1};Y|X_{2})$, $I(X_{2};Y|X_{1})$ and $I(X_{1},X_{2};Y)$ in
(\ref{EV11}), (\ref{EV12}), (\ref{EV22}), (\ref{EV23}) and
(\ref{EV36})  in terms of $(u_{1},u_{2},u)$ along with a feasible
set of triples $\mathcal{P}$ in (\ref{P23}), we obtain the
following two outer bounds on the capacity region of the binary
additive noisy MAC-FB, starting from (\ref{B12})-(\ref{B14}) and
(\ref{B15})-(\ref{B17}),
\begin{align}
\mathcal{DB}_{PC}^{(1)}=\bigcup_{(u_{1},u_{2},u)\in
\mathcal{P}}\Bigg\{(R_{1},R_{2}): \hspace{0.05in}&R_{1}\leq
\min\left(\frac{1}{2}h(u),h(\phi(2u_{1}))\right)\nonumber\\
&R_{2}\leq \frac{1}{2}h(\phi(2u_{2}))\nonumber\\
&R_{1}+R_{2}\leq h\left(\frac{1-u}{2}\right)\Bigg\}\label{DB1}
\end{align}
and
\begin{align}
\mathcal{DB}_{PC}^{(2)}=\bigcup_{(u_{1},u_{2},u)\in
\mathcal{P}}\Bigg\{(R_{1},R_{2}): \hspace{0.05in}&R_{1}\leq \frac{1}{2}h(\phi(2u_{1}))\nonumber\\
&R_{2}\leq
\min\left(\frac{1}{2}h(u),h(\phi(2u_{2}))\right)\nonumber\\
&R_{1}+R_{2}\leq h\left(\frac{1-u}{2}\right)\Bigg\}\label{DB2}
\end{align}
We will plot these outer bounds and their intersection in Figure
$4$. In next section, we will explicitly characterize our upper
bounds for the symmetric-rate point on the capacity region of the
binary additive noisy MAC-FB in consideration.

\section{Explicit Characterization of the Symmetric-rate Upper Bound}
For the binary additive noisy MAC-FB in consideration, it was
shown by Kramer \cite{Kramer:thesis} that the symmetric-rate
cut-set bound is $0.45915$ bits/transmission. It was also shown in
\cite{Kramer:thesis} that the Cover-Leung achievable
symmetric-rate is $0.43621$ bits/transmission and it was improved
to $0.43879$ bits/transmission by using superposition coding and
binning with code trees. For completeness and comparison with
existing bounds, we will first completely characterize our outer
bound for the symmetric-rate by providing the input distribution
$p(t)p(x_{1}|t)p(x_{2}|t)$ which achieves it. By symmetric-rate we
mean a rate $R$ such that the rate pair $(R,R)$ lies in the
capacity region of MAC-FB. For the symmetric-rate, both
$\mathcal{DB}_{PC}^{(1)}$ and $\mathcal{DB}_{PC}^{(2)}$ will yield
the same upper bound. Hence, we will focus on
$\mathcal{DB}_{PC}^{(1)}$. Using (\ref{DB1}), we are interested in
obtaining the largest $R$ over all $(u_{1},u_{2},u)\in
\mathcal{P}$ such that
\begin{align}
R&\leq \min\left(\frac{1}{2}h(u),h(\phi(2u_{1}))\right)\label{S1}\\
R&\leq \frac{1}{2}h(\phi(2u_{2}))\label{S2}\\
2R&\leq h\left(\frac{1-u}{2}\right)\label{S3}
\end{align}
We will show that a seemingly weaker version of the above bound
will improve upon the symmetric-rate cut-set  bound. We will also
show that the weaker bound is in fact the same as the above bound,
and its sole purpose is the simplicity of evaluation and insight
into the input distribution that attains it. We first obtain a
weakened version of (\ref{S1}) as
\begin{align}
R \leq \min\left(\frac{1}{2}h(u),h(\phi(2u_{1}))\right) \leq
h(\phi(2u_{1}))\label{S4}
\end{align}
Next, consider (\ref{S3})
\begin{align}
2R&\leq h\left(\frac{1-u}{2}\right)\label{S5}\\
&=h\left(\frac{1}{2}-\frac{u}{2}\right)\label{S6}\\
&\leq h\left(\frac{1}{2}-\frac{f(2u_{1},2u_{2})}{2}\right)\label{S7}
\end{align}
where (\ref{S7}) follows from (\ref{P6}) and the fact that the
binary entropy function $h(s)$ is monotonically increasing in $s$
for $s\in [0,\frac{1}{2}]$. Combining (\ref{S2}), (\ref{S4}) and
(\ref{S7}), we are interested in the largest $R$ such that
\begin{align}
R &\leq \max_{u_{1},u_{2} \in [0,\frac{1}{4}]}
\min\left(h(\phi(2u_{1})), \frac{1}{2}h(\phi(2u_{2})),
\frac{1}{2}h\left(\frac{1}{2}-\frac{f(2u_{1},2u_{2})}{2}\right)\right)\label{S8}
\end{align}
We note that this upper bound on the symmetric-rate depends only
on $u_{1}$ and $u_{2}$, and therefore, we replace the feasible set
$\mathcal{P}$ with $u_{1},u_{2}\in [0,\frac{1}{4}]$.

We know that $h(\phi(s))$ is concave in $s$ for $s \in[0,1]$.
Hence, it follows that both $h(\phi(2u_{1}))$ and
$\frac{1}{2}h(\phi(2u_{2}))$ are concave in $u_{1}$ and $u_{2}$,
respectively, and hence concave in the pair $(u_{1},u_{2})$. We
also have the following lemma.
\begin{Lem}
The function
\begin{align}
g(u_{1},u_{2})&=\frac{1}{2}h\left(\frac{1-f(2u_{1},2u_{2})}{2}\right)\label{Lem3}
\end{align}
is monotonically decreasing and jointly concave in the pair
$(u_{1},u_{2})$ for $u_{1}, u_{2}\in[0,\frac{1}{4}]$.\end{Lem}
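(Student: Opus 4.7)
The plan is to derive Lemma~3 by combining the convexity of $f$ (Lemma~2) with a standard composition rule for concave functions. The function in question has the form ``half of $h$ evaluated at an affine transformation of $f(2u_1,2u_2)$'', so the strategy decomposes naturally: show that the inner argument of $h$ is a concave function whose range sits in an interval on which $h$ is both concave and monotone, then invoke the composition principle.

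First I would show that $F(u_1,u_2):=f(2u_1,2u_2)$ is jointly convex on $[0,\tfrac{1}{4}]^2$. This is immediate from Lemma~2, because precomposition with the affine map $(u_1,u_2)\mapsto (2u_1,2u_2)$ preserves joint convexity. It follows that $q(u_1,u_2):=\tfrac{1-F(u_1,u_2)}{2}$ is jointly concave, and since $f(x,y)=\tfrac{1-\sqrt{(1-2x)(1-2y)}}{2}\in[0,\tfrac{1}{2}]$ by the discussion following (\ref{53}), the range of $q$ lies in $[\tfrac{1}{4},\tfrac{1}{2}]$. On the interval $[0,\tfrac{1}{2}]$ the binary entropy $h(s)$ is concave and monotonically non-decreasing. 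By the standard composition rule (concave-nondecreasing of concave is concave), $h(q(u_1,u_2))$ is jointly concave in $(u_1,u_2)$, and multiplication by $\tfrac{1}{2}$ preserves this. This gives joint concavity of $g$.

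For the monotonicity claim, I would note that
\begin{align}
\frac{\partial f}{\partial x}=\frac{\sqrt{1-2y}}{2\sqrt{1-2x}}\geq 0,
\end{align}
and symmetrically in $y$, so $F$ is non-decreasing in each of $u_1,u_2$. Hence $q$ is non-increasing in each variable, and composing with the non-decreasing $h$ on $[\tfrac{1}{4},\tfrac{1}{2}]\subset [0,\tfrac{1}{2}]$ yields a non-increasing, strictly decreasing function wherever the derivatives are strictly nonzero. This establishes the monotonicity portion of the lemma.

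The main conceptual issue is bookkeeping, not difficulty: the heavy lifting was already done in Lemma~2, and what remains is to verify that $q(u_1,u_2)$ stays within the monotone branch $[0,\tfrac{1}{2}]$ of $h$ so that the composition rule actually applies. Without this range check the argument would fail, since $h$ is concave but not monotone on all of $[0,1]$. Once the range is confirmed, the composition rule converts Lemma~2 directly into Lemma~3 without any further computation.
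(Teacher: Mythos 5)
Your proposal is correct and follows essentially the same route as the paper: joint concavity via Lemma~2 plus the concave-nondecreasing-of-concave composition rule applied to $h$ on $[0,\tfrac{1}{2}]$, and monotonicity via showing the inner argument $\tfrac{1-f(2u_1,2u_2)}{2}$ is non-increasing and stays in the monotone branch of $h$. The only cosmetic difference is that you verify the monotonicity of $f$ by a sign check on $\partial f/\partial x$, whereas the paper argues it through the increasing function $\phi(2s)$ and the bound $\phi(2u_2)\le\tfrac{1}{2}$; both are valid and equivalent in substance.
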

\begin{proof} It suffices to show that for a fixed $u_{2}$, the function $g(u_{1},u_{2})$
is monotonically decreasing in $u_{1}$. Substituting the value of
$f(2u_{1},2u_{2})$, we have
\begin{align}
g(u_{1},u_{2})&=\frac{1}{2}h\left(\frac{1-(\phi(2u_{1})+\phi(2u_{2})-2\phi(2u_{1})\phi(2u_{2}))}{2}\right)\label{L1}\\
&=\frac{1}{2}h\left(\frac{1}{2}-\frac{\phi(2u_{2})}{2}-\frac{\phi(2u_{1})(1-2\phi(2u_{2}))}{2}\right)\label{L2}
\end{align}
Now using the fact that $\phi(2s)$ is increasing in $s$ for $s\in[0,
\frac{1}{4}]$, we have that for $u_{1}^{'}\geq u_{1}$,
$\phi(2u_{1}^{'})\geq \phi(2u_{1})$. Moreover, the following holds
\begin{align}
\frac{\phi(2u_{1}^{'})(1-2\phi(2u_{2}))}{2}\geq
\frac{\phi(2u_{1})(1-2\phi(2u_{2}))}{2}\label{L3}
\end{align}
since $\phi(2u_{2})\leq \frac{1}{2}$. Now using the above inequality
along with the fact that the binary entropy function $h(s)$ is
increasing for $0 \leq s \leq \frac{1}{2}$, we have that for
$u_{1}^{'}\geq u_{1}$,
\begin{align}
&\frac{1}{2}h\left(\frac{1}{2}-\frac{\phi(2u_{2})}{2}-\frac{\phi(2u_{1})(1-2\phi(2u_{2}))}{2}\right)\geq
\frac{1}{2}h\left(\frac{1}{2}-\frac{\phi(2u_{2})}{2}-\frac{\phi(2u_{1}^{'})(1-2\phi(2u_{2}))}{2}\right)\label{L4}
\end{align}
This shows that for a fixed $u_{2}$, the function $g(u_{1},u_{2})$
is monotonically decreasing in $u_{1}$. As the function is
symmetric in $u_{1}$ and $u_{2}$, the monotonicity of
$g(u_{1},u_{2})$ in $(u_{1},u_{2})$ follows.

To show the concavity of $g(u_{1},u_{2})$ in the pair
$(u_{1},u_{2})$, we first note from Lemma $2$ that
$f(2u_{1},2u_{2})$ is jointly convex in the pair $(u_{1},u_{2})$. We
define another function
\begin{align}
\xi(u_{1},u_{2})&=\frac{1-f(2u_{1},2u_{2})}{2}\label{L5}
\end{align}
Note that $\xi(u_{1},u_{2})$ is jointly concave in the pair
$(u_{1},u_{2})$. Furthermore, the binary entropy function $h(s)$
is concave and nondecreasing for $s\in [0,\frac{1}{2}]$. Hence,
rewriting the function $g(u_{1},u_{2})$ as a composition of two
functions, we obtain
\begin{align}
g(u_{1},u_{2})&=\frac{1}{2}h(\xi(u_{1},u_{2}))\label{L6}
\end{align}
From the theory of composite functions \cite{Boyd:book}, we know
that a composite function $f_{1}(f_{2}(s))$ is concave in $s$ if
$f_{1}(.)$ is concave and nondecreasing and $f_{2}(s)$ is concave in
$s$. Identifying $f_{1}(.)$ with $h(.)$ and $f_{2}(u_{1},u_{2})$
with $\xi(u_{1},u_{2})$, the concavity of $g(u_{1},u_{2})$ in the
pair $(u_{1},u_{2})$ is established.
\end{proof}

Therefore, all three functions in the $\min(.)$ in (\ref{S8}) are
concave in $(u_{1},u_{2})$. Invoking the fact that the minimum of
concave functions is concave, we conclude that the maximum in
(\ref{S8}) is unique. We will now show that the unique pair
$(u_{1}^{*},u_{2}^{*})$ that attains this maximum satisfies the
property that
$h(\phi(2u_{1}^{*}))=\frac{1}{2}h(\phi(2u_{2}^{*}))=g(u_{1}^{*},u_{2}^{*})$.

For this purpose, we first characterize those pairs
$(\tilde{u}_{1},\tilde{u}_{2})$ such that the following holds,
\begin{align}
h(\phi(2\tilde{u}_{1}))=\frac{1}{2}h(\phi(2\tilde{u}_{2}))=g(\tilde{u}_{1},\tilde{u}_{2})\label{S9}
\end{align}
By using (\ref{S9}), we obtain two equations for $\tilde{u}_{1}$
and $\tilde{u}_{2}$, as
\begin{align}
h(\phi(2\tilde{u}_{1}))&=\frac{1}{2}h\left(\frac{1-\phi(2\tilde{u}_{1})}{3-2\phi(2\tilde{u}_{1})}\right)\label{S10}\\
\phi(2\tilde{u}_{2})&=\frac{1-\phi(2\tilde{u}_{1})}{3-2\phi(2\tilde{u}_{1})}\label{S11}
\end{align}
From (\ref{S10}), one can see that $2\tilde{u}_{1}$ is the unique
solution $s \in [0,\frac{1}{2}]$ of the equation
\begin{align}
h(\phi(s))&=\frac{1}{2}h\left(\frac{1-\phi(s)}{3-2\phi(s)}\right)\label{S12}
\end{align}
Obtaining the optimal $\tilde{u}_{1}$ from the above equation is
illustrated in Figure $3$. The unique solutions
$(\tilde{u}_{1},\tilde{u}_{2})$ of (\ref{S10}) and (\ref{S11}) are
\begin{align}
\tilde{u}_{1}&=0.086063, \qquad\tilde{u}_{2}=0.218333\label{S14}
\end{align}
We will now show that this pair $(\tilde{u}_{1},\tilde{u}_{2})$
yields the maximum in (\ref{S8}).

Returning to the maximization problem (\ref{S8}), first denote
$\mathcal{S}$ as the region of allowable $(u_{1},u_{2})$,
\begin{align}
\mathcal{S}&=\Big\{(u_{1},u_{2}): 0\leq u_{1}\leq \frac{1}{4}; 0\leq
u_{2}\leq \frac{1}{4}\Big\}\label{S15}
\end{align}
Also define a subset of this region
\begin{align}
\mathcal{\tilde{S}}&=\Big\{(u_{1},u_{2}): u_{1} \in
(\tilde{u}_{1},\frac{1}{4}]; u_{2} \in
(\tilde{u}_{2},\frac{1}{4}]\Big\}\label{S16}
\end{align}
where $(\tilde{u}_{1},\tilde{u}_{2})$ is given by (\ref{S14}). We
will now show that the pair $(\tilde{u}_{1},\tilde{u}_{2})$ yields
the solution of the maximization problem in (\ref{S8}). Consider
the following two cases,
\begin{enumerate}
    \item  If $(u_{1},u_{2}) \in \mathcal{\tilde{S}}$, then
    by Lemma $3$, we have that $g(u_{1},u_{2})\leq
    g(\tilde{u}_{1},\tilde{u}_{2})$, using which we obtain,
    \begin{align}
    \min\left(h(\phi(2u_{1})),\frac{1}{2}h(\phi(2u_{2})),g(u_{1},u_{2})\right)&\leq g(u_{1},u_{2})\leq
    g(\tilde{u}_{1},\tilde{u}_{2})\label{S17}
    \end{align}
    \item  If $(u_{1},u_{2}) \in \mathcal{S}\setminus
    \mathcal{\tilde{S}}$, we either have $u_{1}\leq \tilde{u}_{1}$ or
    $u_{2} \leq \tilde{u}_{2}$ or both. Using this along with the fact that $h(\phi(2s))$ is monotonically increasing in $s$ for $s\in [0,\frac{1}{4}]$, we obtain
    \begin{align}
    \min\left(h(\phi(2u_{1})),\frac{1}{2}h(\phi(2u_{2})),g(u_{1},u_{2})\right)&\leq
    h(\phi(2\tilde{u}_{1}))\label{S18}
    \end{align}
\end{enumerate}
\begin{figure}[t]
  \centering
  \centerline{\epsfig{figure=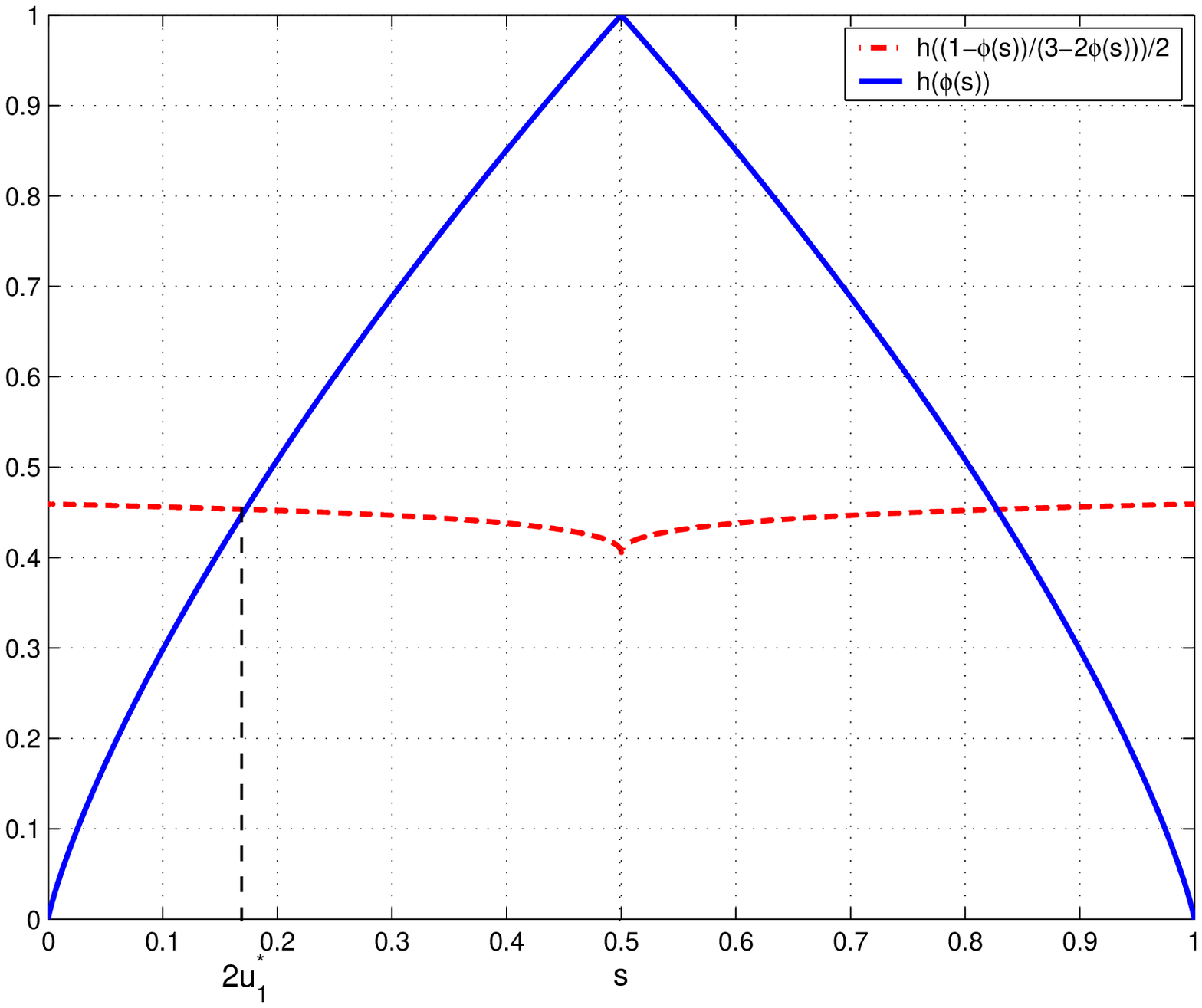,width=8cm}}
  { Figure $3$: Characterization of the optimal $u_{1}^{*}$.}\medskip
\end{figure}
The above two cases show the following,
\begin{align}
\max_{u_{1}\in [0,\frac{1}{4}], u_{2}\in [0,\frac{1}{4}]} \min
\left(h(\phi(2u_{1})),\frac{1}{2}h(\phi(2u_{2})),g(u_{1},u_{2})\right)
&=
h(\phi(2\tilde{u}_{1}))\\
&= \frac{1}{2}h(\phi(2\tilde{u}_{2}))\\
&= g(\tilde{u}_{1},\tilde{u}_{2})
\end{align}
Thus, the maximum in (\ref{S8}) is obtained at
$(u_{1}^{*},u_{2}^{*})=(\tilde{u}_{1},\tilde{u}_{2})$. We now
obtain a distribution $p(t)p(x_{1}|t)p(x_{2}|t)$ which attains
this symmetric-rate upper bound. Fix $T$ to be binary, and select
the involved probabilities as
\begin{align}
p_{0}=p_{1}&=\frac{1}{2}\\
q_{10}=1-q_{11}&=\phi(2u_{1}^{*})\\
q_{20}=1-q_{21}&=\phi(2u_{2}^{*})
\end{align}
The reason for constructing such an input distribution is that, at
this specific distribution, we have the following exact
equalities,
\begin{align}
H(X_{1}|T)&=h(\phi(2u_{1}^{*}))\\
\frac{1}{2}H(X_{2}|T)&=\frac{1}{2}h(\phi(2u_{2}^{*}))\\
\frac{1}{2}I(X_{1},X_{2};Y)&=g(u_{1}^{*},u_{2}^{*})
\end{align}
and we achieve the outer bound we developed with equality.
Substituting the values of $(u_{1}^{*},u_{2}^{*})$, we obtain a
distribution given by,
\begin{align}
p_{0}=p_{1}&=\frac{1}{2}\label{S20}\\
q_{10}=1-q_{11}&=0.095109\label{S21}\\
q_{20}=1-q_{21}&=0.322050\label{S22}
\end{align}
The above input distribution yields a symmetric-rate of $0.45330$
bits/transmission. Moreover, the $u^{*}$ corresponding to this
distribution is given by
\begin{align}
u^{*}&=\sum_{t}p_{t}(q_{1t}+q_{2t}-2q_{1t}q_{2t})\label{S23}\\
&=f(2u_{1}^{*},2u_{2}^{*})\label{S24}\\
&=0.355899\label{S25}
\end{align}
where (\ref{S24}) is by construction of the input distribution
$p(t,x_{1},x_{2})$ and (\ref{S25}) is obtained by substituting the
distribution specified in (\ref{S20})-(\ref{S22}). Moreover,
$\phi(2u_{2}^{*})< u^{*}<\frac{1}{2}$, hence we also have that
\begin{align}
\frac{1}{2}h(u^{*})&\geq \frac{1}{2}h(\phi(2u_{2}^{*}))
=h(\phi(2u_{1}^{*}))\label{S26}
\end{align}
This shows that the weakened version of the upper bound obtained in
(\ref{S8}) is indeed tight and a binary auxiliary random variable
$T$ with uniform distribution over $\{0,1\}$ is sufficient to attain
this symmetric-rate upper bound.

\section{Evaluation of the Cover-Leung Achievable Rate Region}
For completeness we will also obtain a simple characterization of
the Cover-Leung inner bound for our binary additive noisy MAC-FB.
For this purpose, we follow a two-step approach. In the first
step, we first obtain an outer bound on the achievable rate region
in terms of two variables $(u_{1},u_{2})$.  In the second step, we
specify an input distribution, as a function of $(u_{1},u_{2})$,
which achieves the outer bound. We therefore arrive at an
alternate characterization of the Cover-Leung achievable rate
region in terms of the variables $(u_{1},u_{2})$.

The Cover-Leung achievable rate region \cite{CL:1981} is given as,
\begin{align}
\mathcal{CL}=\Big\{(R_{1},R_{2}):\hspace{0.05in}&R_{1}\leq I(X_{1};Y|X_{2},T)\label{CL1}\\
&R_{2}\leq I(X_{2};Y|X_{1},T)\label{CL2}\\
&R_{1}+R_{2}\leq I(X_{1},X_{2};Y)\Big\}\label{CL3}
\end{align}
where the random variables $(T,X_{1},X_{2},Y)$ have the joint
distribution,
\begin{align}
p(t,x_{1},x_{2},y)&=p(t)p(x_{1}|t)p(x_{2}|t)p(y|x_{1},x_{2})\label{CL4}
\end{align}
and the random variable $T$ is subject to a cardinality constraint
of $|\mathcal{T}|\leq
\mbox{min}(|\mathcal{X}_{1}||\mathcal{X}_{2}|+1,\break
|\mathcal{Y}|+2)$. For the binary, additive noisy MAC in
consideration, the constraints in (\ref{CL1})-(\ref{CL3}) become,
\begin{align}
R_{1}&\leq \frac{1}{2}H(X_{1}|T)\label{CL6}\\
R_{2}&\leq \frac{1}{2}H(X_{2}|T)\label{CL7}\\
R_{1}+R_{2}&\leq I(X_{1},X_{2};Y)\label{CL8}
\end{align}
We will first obtain an outer bound on the region specified by
(\ref{CL6})-(\ref{CL8}) in terms of two variables $(u_{1},u_{2})$.
For every pair $(u_{1},u_{2})$, we will then specify an input
distribution which will attain this outer bound. Note that the
three constraints (\ref{CL6})-(\ref{CL8}) are of similar form as
in the case of $\mathcal{DB}_{PC}^{(1)}$ and
$\mathcal{DB}_{PC}^{(2)}$, and we proceed in a similar manner to
obtain upper bounds on the three terms above in terms of $u_{1}$
and $u_{2}$ as,
\begin{align}
R_{1}&\leq \frac{1}{2}h(\phi(2u_{1}))\label{CL9}\\
R_{2}&\leq \frac{1}{2}h(\phi(2u_{2}))\label{CL10}\\
R_{1}+R_{2}&\leq
h\left(\frac{1-f(2u_{1},u_{2})}{2}\right)\label{CL11}
\end{align}
where the variables $(u_{1},u_{2})$ belong to the set
$\mathcal{S}$ defined in (\ref{S15}). Hence, an outer bound on the
rate region specified by (\ref{CL6})-(\ref{CL8}) is given as
$\mathcal{O}$, where
\begin{align}
\mathcal{O}=\bigcup_{(u_{1},u_{2})\in
\mathcal{S}}\Bigg\{(R_{1},R_{2}): \hspace{0.05in}&R_{1}\leq \frac{1}{2}h(\phi(2u_{1}))\nonumber\\
&R_{2}\leq \frac{1}{2}h(\phi(2u_{2}))\nonumber\\
&R_{1}+R_{2}\leq
h\left(\frac{1-f(2u_{1},u_{2})}{2}\right)\label{CL14}\Bigg\}
\end{align}
Let $(u_{1},u_{2})$ be any arbitrary pair which belongs to
$\mathcal{S}$. Consider an input distribution for which
$|\mathcal{T}|=2$, and $T$ is uniform over $\{0,1\}$ and,
\begin{align}
p_{0}=p_{1}&=\frac{1}{2}\label{C16}\\
q_{10}=1-q_{11}&=\phi(2u_{1})\label{C17}\\
q_{20}=1-q_{21}&=\phi(2u_{2})\label{C18}
\end{align}
For this input distribution, we obtain the following exact
equalities
\begin{align}
H(X_{1}|T)&=h(\phi(2u_{1}))\label{C22}\\
H(X_{2}|T)&=h(\phi(2u_{2}))\label{C23}\\
I(X_{1},X_{2};Y)&=h\left(\frac{1-f(2u_{1},2u_{2})}{2}\right)\label{C24}
\end{align}
We have thus shown that the outer bound we obtained on the
achievable rate region in terms of $(u_{1},u_{2})$ can be attained
by a set of input distributions for which the involved auxiliary
random variable $T$ is binary and uniform. This in turn implies
that a binary and uniform random variable $T$ is sufficient to
characterize the entire Cover-Leung achievable rate region for the
binary additive noisy MAC-FB. By varying over all such input
distributions, or equivalently, by varying $(u_{1},u_{2})$ in the
set $\mathcal{S}$, we obtain the entire Cover-Leung achievable
rate region. We should remark here that when evaluating the
$\mathcal{DB}_{PC}$ bound in the previous section for $Z=X_{1}$
and $Z=X_{2}$, it was not necessary to specify the distribution
which achieves the bound, since it was an outer bound. On the
other hand, when evaluating the Cover-Leung bound, since it is an
achievability, it is necessary to give a distribution which
achieves the bound.

The dependence balance bounds corresponding to the parallel
channel choices $Z=X_{1}$ and $Z=X_{2}$, along with the cut-set
upper bound and the Cover-Leung achievable rate region are shown
in Figure $4$. It is interesting to note that our bound improves
upon the cut-set bound at all points where the Cover-Leung
achievable rate region is strictly larger than the capacity region
without feedback. In other words, our bound improves upon the
cut-set bound at all points where feedback increases capacity.

We should remark that our choices of parallel channels; namely,
$Z=X_{1}$ and $Z=X_{2}$ are the simplest ones which ensure that
$I(X_{1};X_{2}|Y,Z,T)=0$ but they yield fixed information leaks.
We believe that by a more elaborate choice of a parallel channel,
i.e., by carefully selecting a parameterized parallel channel
$p^{+}(z|x_{1},x_{2},y,t)$ such that $I(X_{1};X_{2}|Y,Z,T)=0$, one
would still be able to restrict the input distributions to a
conditionally independent form and then optimize the parameters of
the parallel channel to minimize the information leak terms. This
approach can potentially improve upon our outer bound.

\begin{figure}[t]
  \centering
  \centerline{\epsfig{figure=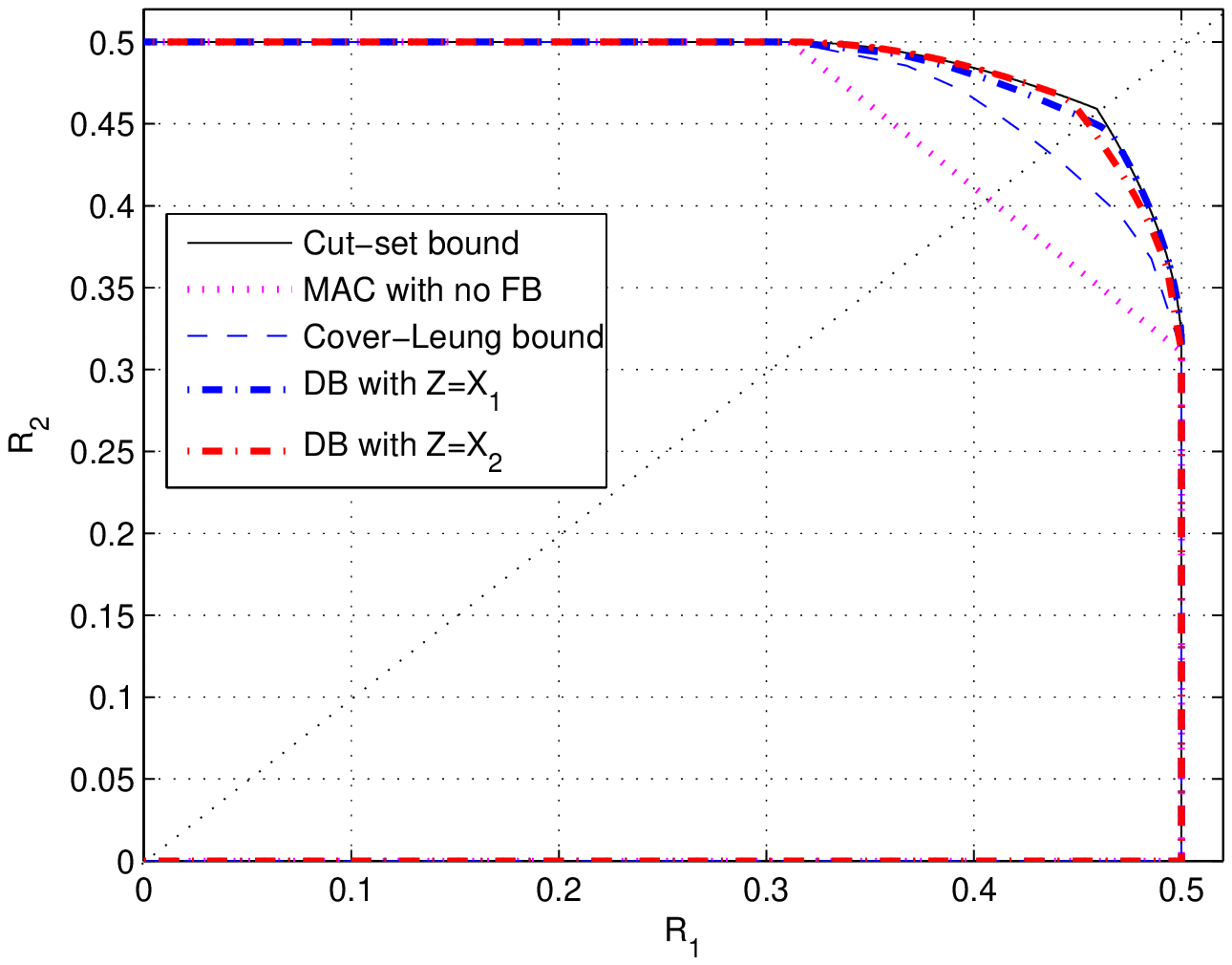,width=13cm}}
  { Figure $4.1$: Illustration of our bounds for the capacity of binary additive noisy MAC-FB.}
  \centering
  \centerline{\epsfig{figure=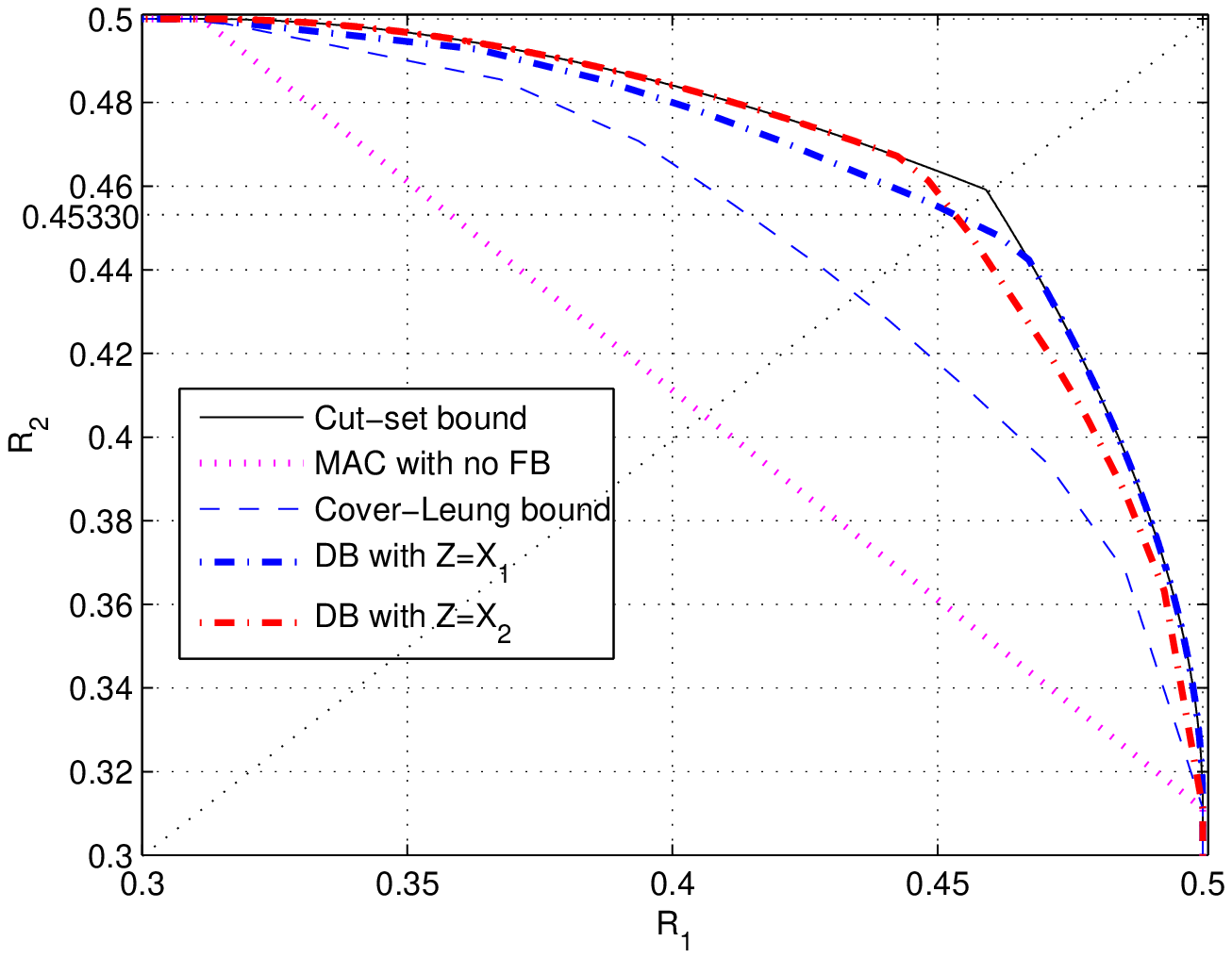,width=13cm}}
  { Figure $4.2$: An enlarged illustration of the portion of Figure $4.1$ where feedback increases capacity.}\medskip
\end{figure}

\section{The Capacity Region of the Binary Erasure MAC-FB}
The capacity region of a class of discrete memoryless MAC-FB was
characterized in \cite{WL:1982} by establishing a converse and it
was shown to be equal to the Cover-Leung achievable rate region.
This class of channels satisfy the property that at least one of
the channel inputs say $X_{1}$, can be written as a deterministic
function of the other channel input $X_{2}$ and the channel output
$Y$. The binary erasure MAC, where $Y=X_{1}+X_{2}$, falls into
this class of channels. In addition, the binary erasure MAC-FB is
the noiseless version of the binary additive noisy MAC-FB studied
in this paper.

Willems showed in \cite{WillemsonFB:1984} that a binary selection
of auxiliary random variable is sufficient to obtain the sum-rate
point of the capacity region of the binary erasure MAC-FB. In this
section, we will show that by using our results for composite
functions which were presented in previous sections, it is
possible to obtain all points on the boundary of this capacity
region using a binary auxiliary random variable. The feedback
capacity region of this channel is given by the Cover-Leung
achievable rate region given in (\ref{CL1})-(\ref{CL3}) which can
be simplified for the binary erasure MAC-FB as,
\begin{align}
R_{1}&\leq H(X_{1}|T)\label{E1}\\
R_{2}&\leq H(X_{2}|T)\label{E2}\\
R_{1}+R_{2}&\leq H(Y)\label{E3}
\end{align}
We obtain three upper bounds on the expressions appearing in the
bounds (\ref{E1})-(\ref{E3}). From (\ref{EV11}), we have,
\begin{align}
H(X_{1}|T)&\leq h(\phi(2u_{1}))\label{E7}
\end{align}
Similarly, we also have
\begin{align}
H(X_{2}|T)&\leq h(\phi(2u_{2}))\label{E8}
\end{align}
We now obtain an upper bound on $H(Y)$, by first noting that,
\begin{align}
H(Y)&=h^{(3)}(P_{Y}(0),P_{Y}(1),P_{Y}(2))\label{E9}
\end{align}
where
\begin{align}
P_{Y}(0)&=\sum_{t}p_{t}q_{1t}q_{2t}\label{E10}\\
P_{Y}(1)&=\sum_{t}p_{t}(q_{1t}+q_{2t}-2q_{1t}q_{2t})\label{E11}\\
P_{Y}(2)&=\sum_{t}p_{t}(1-q_{1t})(1-q_{2t})\label{E12}
\end{align}
Now, we use the following inequality established in
\cite{WillemsonFB:1984},
\begin{align}
h^{(3)}(a,b,c)&=\frac{1}{2}h^{(3)}(a,b,c)+\frac{1}{2}h^{(3)}(c,b,a)\label{E13}\\
&\leq h^{(3)}\left(\frac{a+c}{2},b,\frac{a+c}{2}\right)\label{E14}\\
&=h(b)+1-b\label{E15}
\end{align}
where (\ref{E14}) follows by the concavity of the entropy function
and by the application of Jensen's inequality \cite{Cover:book}.
Using (\ref{E15}) and continuing from (\ref{E9}), we obtain
\begin{align}
H(Y)&=h^{(3)}(P_{Y}(0),P_{Y}(1),P_{Y}(2))\label{E16}\\
&\leq h(P_{Y}(1))+1-P_{Y}(1)\label{E17}\\
&= h(u)+1-u\label{E18}
\end{align}
where $u$ is defined in (\ref{EV4}). Using (\ref{E7}), (\ref{E8})
and (\ref{E18}), we can write an outer bound $\mathcal{O}_{1}$ on
the capacity region as follows,
\begin{align}
\mathcal{O}_{1}&=\bigcup_{(u_{1},u_{2},u)\in
\mathcal{P}}\mathcal{O}_{1}(u_{1},u_{2},u)\label{E28}
\end{align}
where
\begin{align}
\mathcal{O}_{1}(u_{1},u_{2},u)=\Big\{(R_{1},R_{2}):
\hspace{0.05in}&R_{1}\leq
h(\phi(2u_{1}))\nonumber\\
& R_{2}\leq h(\phi(2u_{2}))\nonumber\\
& R_{1}+R_{2} \leq h(u)+1-u \label{E21}\Big\}
\end{align}
and the set $\mathcal{P}$ is defined in (\ref{P23}). We will now
obtain a simpler characterization of $\mathcal{O}_{1}$ in terms of
two variables $(u_{1},u_{2})$ by showing that
$\mathcal{O}_{1}\equiv\mathcal{O}_{2}$, where,
\begin{align}
\mathcal{O}_{2}&=\bigcup_{(u_{1},u_{2})\in
\mathcal{S}}\mathcal{O}_{2}(u_{1},u_{2})\label{E33}
\end{align}
where
\begin{align}
\mathcal{O}_{2}(u_{1},u_{2})=\Big\{(R_{1},R_{2}):
\hspace{0.05in}&R_{1}\leq
h(\phi(2u_{1}))\nonumber\\
& R_{2}\leq h(\phi(2u_{2}))\nonumber\\
& R_{1}+R_{2} \leq h(f(2u_{1},2u_{2}))+1-f(2u_{1},2u_{2})
\label{E32}\Big\}
\end{align}
The inclusion $\mathcal{O}_{2}\subseteq \mathcal{O}_{1}$ is
straightforward by forcing $u=f(2u_{1},2u_{2})$ in
$\mathcal{O}_{1}$. We will now show that $\mathcal{O}_{1}\subseteq
\mathcal{O}_{2}$. For this purpose, we will need the following
lemma.
\begin{Lem} The function
\begin{align}
\mu(s)&=h(s)+1-s\label{E34}
\end{align}
is concave in $s$ for $s\in [0,1]$ and takes its maximum value at
$s=\frac{1}{3}$. Moreover, the function $\mu(s)$ is increasing in
$s$ for $s\in [0,\frac{1}{3}]$ and decreasing in $s$ for $s\in
[\frac{1}{3},1]$.
\end{Lem}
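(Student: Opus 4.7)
The plan is to verify all three claims about $\mu(s) = h(s) + 1 - s$ by elementary calculus, since nothing more sophisticated is needed.

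First I would handle concavity by decomposition. The binary entropy $h(s) = -s\log s - (1-s)\log(1-s)$ is well-known to be concave on $[0,1]$ (indeed, $h''(s) = -\frac{1}{\ln 2}\bigl(\frac{1}{s} + \frac{1}{1-s}\bigr) < 0$ on $(0,1)$), and the affine map $s \mapsto 1-s$ is trivially concave. Since the sum of two concave functions is concave, $\mu(s)$ is concave on $[0,1]$.

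Next I would locate the maximum by differentiation. Computing $\mu'(s) = \log_2\!\bigl(\tfrac{1-s}{s}\bigr) - 1$ and setting $\mu'(s) = 0$ gives $\tfrac{1-s}{s} = 2$, i.e.\ $s = \tfrac{1}{3}$. By the concavity established above, any interior critical point is a global maximum on $[0,1]$, so $\mu$ attains its maximum at $s = 1/3$.

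Finally, the monotonicity statements would follow from the sign of $\mu'$. Since $\log_2\!\bigl(\tfrac{1-s}{s}\bigr)$ is strictly decreasing in $s$ on $(0,1)$, we have $\mu'(s) > 0$ precisely when $\tfrac{1-s}{s} > 2$, i.e.\ for $s < 1/3$, and $\mu'(s) < 0$ for $s > 1/3$. This gives that $\mu$ is (strictly) increasing on $[0,1/3]$ and (strictly) decreasing on $[1/3,1]$, completing the claims. There is no real obstacle here; the only mild care point is handling the boundary behavior at $s = 0$ and $s = 1$, which is immediate since $h$ extends continuously with $h(0) = h(1) = 0$ and the derivative analysis on the open interval suffices to conclude monotonicity on the closed interval.
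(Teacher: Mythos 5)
Your proof is correct and follows essentially the same route as the paper, which simply notes that $\mu$ is concave because both $h(s)$ and $-s$ are concave; your explicit computation of $\mu'(s)=\log_2\bigl(\tfrac{1-s}{s}\bigr)-1$ to locate the critical point at $s=1/3$ and derive the monotonicity claims just spells out the details the paper leaves to the reader.
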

The proof of this lemma follows from the fact that both $h(s)$ and
$-s$ are concave in $s$.

Now consider any arbitrary triple $(u_{1},u_{2},u)\in \mathcal{P}$.
We can classify any such triple into one of the following cases:
\begin{enumerate}
\item If $f(2u_{1},2u_{2})\leq u\leq \frac{1}{2}$: for any such
$(u_{1},u_{2},u)$, there exists a pair
$(\bar{u}_{1},\bar{u}_{2})$, such that
\begin{align}
u_{1}&\leq \bar{u}_{1}\leq \frac{1}{4}\label{E35}\\
u_{2}&\leq \bar{u}_{2}\leq \frac{1}{4}\label{E36}\\
u&=f(2\bar{u}_{1},2\bar{u}_{2})\label{E37}
\end{align}
One such pair $(\bar{u}_{1},\bar{u}_{2})$ can be obtained as
follows. Using the fact that for a fixed $u_{1}$,
$f(2u_{1},2u_{2})$ is increasing in $u_{2}$, we select
$\bar{u}_{1}=u_{1}$ and solve for $u_{2}\leq \bar{u}_{2} \leq
\frac{1}{4}$ for which $f(2\bar{u}_{1},2\bar{u}_{2})=u$. The
required $\bar{u}_{2}$ is obtained as,
\begin{align}
\bar{u}_{2}&=\frac{1}{4}\left(1-\frac{(1-2u)^{2}}{(1-4u_{1})}\right)\label{E39}
\end{align}
For such a pair $(\bar{u}_{1},\bar{u}_{2})$, the following
inequalities hold,
\begin{align}
h(\phi(2u_{1}))&= h(\phi(2\bar{u}_{1}))\label{E40}\\
h(\phi(2u_{2}))&\leq h(\phi(2\bar{u}_{2}))\label{E41}\\
h(u)+1-u&=
h(f(2\bar{u}_{1},2\bar{u}_{2}))+1-f(2\bar{u}_{1},2\bar{u}_{2})\label{E42}
\end{align}
\item If $f(2u_{1},2u_{2})\leq \frac{1}{2} \leq u \leq
1-(u_{1}+u_{2})$, then we have by Lemma $4$,
\begin{align}
h(u)+1-u &\leq h\left(\frac{1}{2}\right)+1-\frac{1}{2}\label{E43}\\
&=\frac{3}{2} \label{E44}
\end{align}
Now consider the pair
$(\bar{u}_{1},\bar{u}_{2})=(\frac{1}{4},\frac{1}{4})$, for which
we have $f(2\bar{u}_{1},2\bar{u}_{2})=\frac{1}{2}$. Hence we have
that,
\begin{align}
h(\phi(2u_{1}))&\leq h(\phi(2\bar{u}_{1}))=1\label{E45}\\
h(\phi(2u_{2}))&\leq h(\phi(2\bar{u}_{2}))=1\label{E46}\\
h(u)+1-u&\leq
h(f(2\bar{u}_{1},2\bar{u}_{2}))+1-f(2\bar{u}_{1},2\bar{u}_{2})=\frac{3}{2}\label{E47}
\end{align}
\end{enumerate}
We have thus shown that for any triple $(u_{1},u_{2},u)$, there
exists a pair $(\bar{u}_{1},\bar{u}_{2})$, such that
$\mathcal{O}_{1}(u_{1},u_{2},u)\subseteq
\mathcal{O}_{2}(\bar{u}_{1},\bar{u}_{2})$, which in turn implies
that $\mathcal{O}_{1}\subseteq \mathcal{O}_{2}$, and consequently
$\mathcal{O}_{1}\equiv \mathcal{O}_{2}$. Hence, we have an outer
bound on the capacity region as given by $\mathcal{O}_{2}$.

The outer bound $\mathcal{O}_{2}$ is evaluated over the set of
pairs $(u_{1},u_{2})$ such that $u_{1},u_{2}\in [0,\frac{1}{4}]$.
For any such arbitrary pair $(u_{1},u_{2})$,  an input
distribution which achieves the set of rate pairs specified by
$\mathcal{O}_{2}(u_{1},u_{2})$ is obtained by selecting
$|\mathcal{T}|=2$, and
\begin{align}
p_{0}=p_{1}&=\frac{1}{2}\label{E48}\\
q_{10}=1-q_{11}&=\phi(2u_{1})\label{E49}\\
q_{20}=1-q_{21}&=\phi(2u_{2})\label{E50}
\end{align}
The set of rates achievable by the distribution specified in
(\ref{E48})-(\ref{E50}) are obtained as,
\begin{align}
R_{1}&\leq H(X_{1}|T) =h(\phi(2u_{1}))\label{E51}\\
R_{2}&\leq H(X_{2}|T)=h(\phi(2u_{2}))\label{E52}\\
R_{1}+R_{2}&\leq
H(Y)=h(f(2u_{1},2u_{2}))+1-f(2u_{1},2u_{2})\label{E53}
\end{align}

This shows that the capacity region of binary erasure MAC-FB can
be obtained by a binary and uniform selection of the auxiliary
random variable $T$. The capacity region of the binary erasure MAC
with and without feedback and the cut-set bound are illustrated in
Figure $5$. It was shown in \cite{WillemsonFB:1984} that the
sum-rate point on the boundary of the capacity region lies
strictly below the ``total cooperation" line. This is equivalent
to saying that the cut-set bound is not tight for the sum-rate
point. From our result, it is now clear that the cut-set bound is
not tight for asymmetric rate pairs either. In fact, it is not
tight at all boundary points where feedback increases capacity.

Moreover, our result also shows that a simple selection of binary
and uniform $T$ is sufficient to evaluate the boundary of the
capacity region of binary erasure MAC-FB. Simple feedback
strategies for a class of two user MAC-FB were developed in
\cite{KramerMAC:1999}. It was shown that for the binary erasure
MAC, these feedback strategies yield all rate points for a binary
selection of the auxiliary random variable $T$. Thus, our result
shows that these feedback strategies are indeed optimal for the
binary erasure MAC-FB and yield all rates on the boundary of its
feedback capacity region.

\section{Conclusions}
In this paper, we obtained a new outer bound on the capacity
region of a MAC-FB by using the idea of dependence balance. We
considered a binary additive noisy MAC-FB for which it is known
that feedback increases capacity but the feedback capacity region
is not known. The best known outer bound on the feedback capacity
region of this channel was the cut-set bound. We used the
dependence balance bound to improve upon the cut-set bound at all
points in the capacity region of this channel where feedback
increases capacity. Our result is somewhat surprising once it is
realized that the channel we considered in this paper is the
discrete version of the two-user Gaussian MAC-FB considered by
Ozarow in \cite{OZ:1984} where the cut-set bound was shown to be
tight.

Our outer bound is difficult to evaluate due to an involved
auxiliary random variable $T$. For binary inputs, the cardinality
bound on $T$ is $|\mathcal{T}|\leq 7$ which makes it intractable
to evaluate the outer bound. We overcome this difficulty by making
use of composite functions and their properties to obtain a simple
characterization of our bound. As an application of the properties
of the composite functions developed in this paper, we are also
able to completely characterize the Cover-Leung achievable rate
region for this channel.

The capacity region of the binary erasure MAC-FB is known and it
coincides with the Cover-Leung achievable rate region. Although the
capacity region is known in principle, it is not known how to
compute the entire region, the difficulty arising again due to the
involved auxiliary random variable. We again make use of the
composite functions to give an alternate characterization of the
capacity region of the binary erasure MAC-FB. In addition, we go on
to show that a binary and uniform auxiliary random variable
selection is sufficient to evaluate its feedback capacity region.

\begin{figure}[t]
  \centering
  \centerline{\epsfig{figure=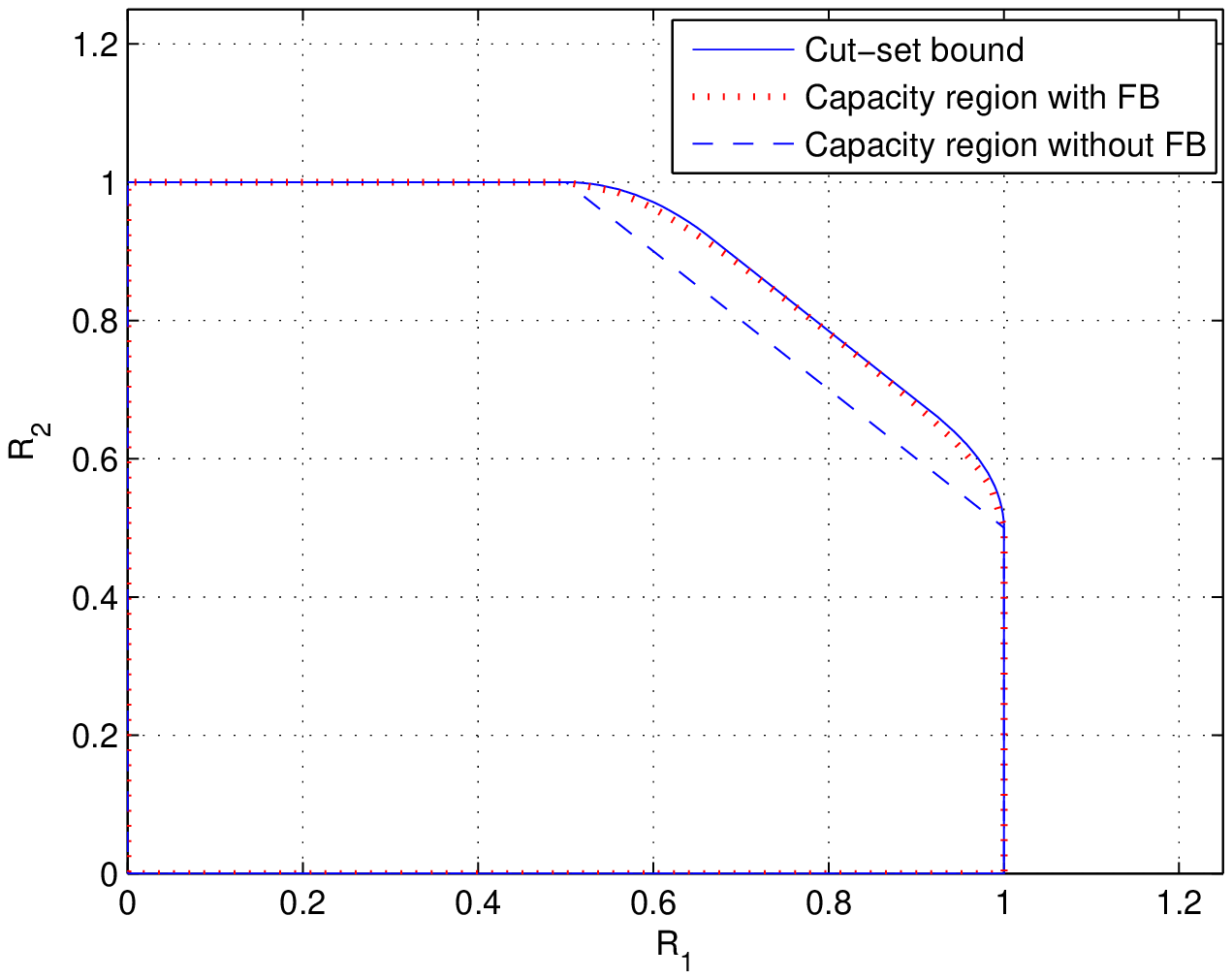,width=12cm}}
  { Figure $5.1$: Illustration of the capacity region of binary erasure MAC-FB.}
  \centering
  \centerline{\epsfig{figure=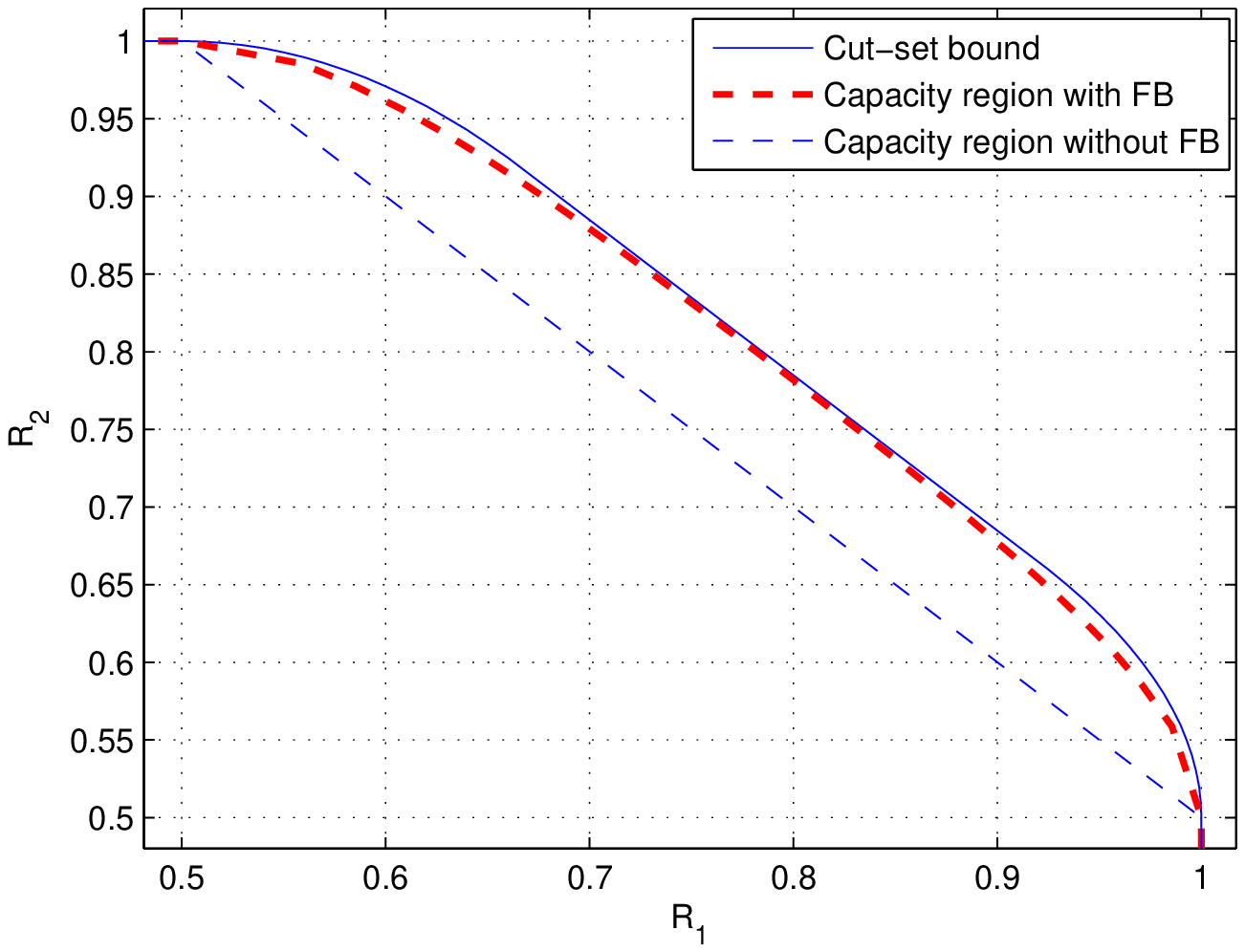,width=12cm}}
  { Figure $5.2$: An enlarged illustration of the portion of Figure $5.1$ where feedback increases capacity.
 }\medskip
\end{figure}

\bibliographystyle{unsrt}
\bibliography{journal}

\end{document}